\documentclass[11pt,a4paper]{article}
\usepackage[margin=25mm]{geometry}
\usepackage[T1]{fontenc}
\usepackage{amsmath,amssymb,amsthm}
\usepackage[font=small,labelfont=bf]{caption}
\usepackage{booktabs}
\usepackage{array}
\usepackage{longtable}
\usepackage{placeins}
\usepackage{algorithm}
\usepackage{algpseudocode}
\usepackage{xcolor}
\usepackage{tikz}
\usetikzlibrary{matrix,calc,arrows.meta,fit,backgrounds,shapes.geometric}
\usepackage{pgfplots}
\pgfplotsset{compat=1.18, every axis/.append style={label style={font=\small}, tick label style={font=\small},
legend style={font=\small, draw=black!25, fill=white, fill opacity=0.9, text opacity=1}, legend cell align=left}, grid
style={black!12, line width=0.3pt}}
\usepackage[hidelinks]{hyperref}

\definecolor{tabblue}{HTML}{1F77B4}
\definecolor{taborange}{HTML}{FF7F0E}
\definecolor{tabgreen}{HTML}{2CA02C}
\definecolor{tabred}{HTML}{D62728}
\definecolor{tabpurple}{HTML}{9467BD}
\colorlet{lbcolor}{tabblue}
\colorlet{kcolor}{tabred}
\colorlet{optcolor}{tabgreen}
\colorlet{sparsecolor}{tabblue}
\colorlet{densecolor}{taborange}
\colorlet{termA}{tabblue}
\colorlet{termB}{taborange}
\colorlet{termC}{tabgreen}

\newcommand{\NInst}{375}
\newcommand{\NSparse}{190}
\newcommand{\NDense}{185}
\newcommand{\NProven}{208}
\newcommand{\NProvenSparse}{187}
\newcommand{\NProvenDense}{21}

\newcommand{\NOpenSparse}{3}

\newcommand{\NProvenReal}{197}
\newcommand{\NProvenSatOnly}{11}

\newcommand{\NRuns}{5}
\newcommand{\NRunsSame}{280}
\newcommand{\NRunsSameDenseOpen}{69}
\newcommand{\NRunsDiffer}{95}
\newcommand{\MaxRunSpread}{3}
\newcommand{\NRunsSingleBest}{39}
\newcommand{\DenseLargeRunStd}{0.6}
\newcommand{\NBetter}{278}
\newcommand{\NEqual}{79}

\newcommand{\NNew}{18}

\newcommand{\GroupTable}{%
\begin{tabular}{llrrrrrr}\toprule
class & $n$ & inst. & length & bound & proven & time to & running \\
 & & & $/n$ & $/n$ & & best [s] & time [s] \\ \midrule
sparse & 3--9 & 70 & 0.97 & 0.97 & 70 & $<$0.01 & $<$0.01 \\
sparse & 10--16 & 70 & 1.00 & 1.00 & 70 & $<$0.01 & $<$0.01 \\
sparse & 24--100 & 50 & 1.00 & 1.00 & 47 & 0.02 & 1.63 \\
\midrule
dense & 3--7 & 50 & 2.69 & 2.18 & 21 & 5.65 & 49 \\
dense & 8--16 & 90 & 4.43 & 1.79 & 0 & 155 & 660 \\
dense & 24--100 & 45 & 4.17 & 1.58 & 0 & 111 & 660 \\
\bottomrule\end{tabular}}
\newcommand{\StageTable}{%
\begin{tabular}{llrrr}\toprule
lower bound & valid for & sparse & dense & all \\ \midrule
degree & real weights & 23 & 0 & 23 \\
two-line bound & real weights & 164 & 10 & 174 \\
SAT certificate & precision $1/s$ & 0 & 11 & 11 \\
\midrule
proven optimal & & 187 & 21 & 208 \\
no proof (600\,s) & & 3 & 164 & 167 \\
\bottomrule\end{tabular}}
\newcommand{\MethodTable}{%
\begin{tabular}{lrrr}\toprule
component & sparse & dense, $n\le16$ & dense, $n\ge24$ \\ \midrule
Birkhoff algorithm, zeroing rule & 118 & 18 & 0 \\
frequency rule, unique sums & 47 & 0 & 0 \\
enumeration of weight sets & 20 & 1 & 0 \\
beam search & 0 & 104 & 0 \\
SAT re-optimization of the tail & 0 & 11 & 9 \\
large neighborhood search & 1 & 6 & 35 \\
randomized restarts & 4 & 0 & 1 \\
\bottomrule\end{tabular}}
\newcommand{\FamilyTable}{%
\begin{tabular}{lrrrrrrr}\toprule
family & $n$ & inst. & length & bound & proven & time to & running \\
 & & & & & & best [s] & time [s] \\ \midrule
\texttt{B3\_3} & 3 & 10 & 2.7 & 2.7 & 10 & $<$0.01 & $<$0.01 \\
\texttt{B4\_4} & 4 & 10 & 3.8 & 3.8 & 10 & $<$0.01 & $<$0.01 \\
\texttt{B5\_5} & 5 & 10 & 4.8 & 4.8 & 10 & $<$0.01 & $<$0.01 \\
\texttt{B6\_6} & 6 & 10 & 6.0 & 6.0 & 10 & $<$0.01 & $<$0.01 \\
\texttt{B7\_7} & 7 & 10 & 7.0 & 7.0 & 10 & $<$0.01 & $<$0.01 \\
\texttt{B8\_8} & 8 & 10 & 8.0 & 8.0 & 10 & $<$0.01 & $<$0.01 \\
\texttt{B9\_9} & 9 & 10 & 9.0 & 9.0 & 10 & $<$0.01 & $<$0.01 \\
\texttt{B10\_10} & 10 & 10 & 10.0 & 10.0 & 10 & $<$0.01 & $<$0.01 \\
\texttt{B11\_11} & 11 & 10 & 11.0 & 11.0 & 10 & $<$0.01 & $<$0.01 \\
\texttt{B12\_12} & 12 & 10 & 12.0 & 12.0 & 10 & $<$0.01 & $<$0.01 \\
\texttt{B13\_13} & 13 & 10 & 13.0 & 13.0 & 10 & $<$0.01 & $<$0.01 \\
\texttt{B14\_14} & 14 & 10 & 14.0 & 14.0 & 10 & $<$0.01 & $<$0.01 \\
\texttt{B15\_15} & 15 & 10 & 15.0 & 15.0 & 10 & $<$0.01 & $<$0.01 \\
\texttt{B16\_16} & 16 & 10 & 16.0 & 16.0 & 10 & $<$0.01 & $<$0.01 \\
\texttt{B24\_24} & 24 & 10 & 24.0 & 24.0 & 10 & $<$0.01 & $<$0.01 \\
\texttt{B32\_32} & 32 & 10 & 32.0 & 32.0 & 10 & $<$0.01 & 0.09 \\
\texttt{B48\_48} & 48 & 10 & 48.0 & 48.0 & 10 & $<$0.01 & 0.02 \\
\texttt{B64\_64} & 64 & 10 & 64.0 & 64.0 & 10 & 0.01 & 0.32 \\
\texttt{B100\_100} & 100 & 10 & 100.0 & 99.7 & 7 & 0.06 & 84 \\
\midrule
\texttt{B3\_9} & 3 & 10 & 4.5 & 4.5 & 10 & $<$0.01 & $<$0.01 \\
\texttt{B4\_16} & 4 & 10 & 9.6 & 9.6 & 10 & $<$0.01 & 0.21 \\
\texttt{B5\_25} & 5 & 10 & 13.9 & 12.9 & 1 & 3.82 & 589 \\
\texttt{B6\_36} & 6 & 10 & 19.2 & 13.7 & 0 & 18 & 661 \\
\texttt{B7\_49} & 7 & 10 & 24.9 & 15.0 & 0 & 139 & 660 \\
\texttt{B8\_64} & 8 & 10 & 29.9 & 15.5 & 0 & 53 & 660 \\
\texttt{B9\_81} & 9 & 10 & 34.0 & 16.3 & 0 & 126 & 660 \\
\texttt{B10\_100} & 10 & 10 & 46.5 & 18.0 & 0 & 179 & 660 \\
\texttt{B11\_121} & 11 & 10 & 51.7 & 20.0 & 0 & 187 & 660 \\
\texttt{B12\_144} & 12 & 10 & 55.8 & 21.3 & 0 & 289 & 660 \\
\texttt{B13\_169} & 13 & 10 & 60.8 & 23.0 & 0 & 194 & 660 \\
\texttt{B14\_196} & 14 & 10 & 64.9 & 24.6 & 0 & 226 & 660 \\
\texttt{B15\_225} & 15 & 10 & 68.1 & 26.0 & 0 & 233 & 660 \\
\texttt{B16\_256} & 16 & 10 & 72.7 & 27.7 & 0 & 76 & 660 \\
\texttt{B24\_576} & 24 & 10 & 98.0 & 39.0 & 0 & 72 & 660 \\
\texttt{B32\_1024} & 32 & 10 & 150.0 & 53.0 & 0 & 122 & 660 \\
\texttt{B48\_2304} & 48 & 10 & 199.5 & 75.0 & 0 & 127 & 660 \\
\texttt{B64\_4096} & 64 & 10 & 243.0 & 96.9 & 0 & 119 & 660 \\
\texttt{B100\_10000} & 100 & 5 & 404.2 & 151.0 & 0 & 142 & 660 \\
\bottomrule\end{tabular}}
\newcommand{\StatsTable}{%
\begin{tabular}{lrrrrr}\toprule
class & $n$ & scale $s$ & nonzeros per line & permutations per position & witness \\ \midrule
sparse & 3 & $10^{3}$ & 2.2 & 1.21 & 100\,\% \\
sparse & 4 & $10^{4}$ & 2.7 & 1.41 & 98\,\% \\
sparse & 5 & $10^{4}$ & 3.2 & 1.50 & 94\,\% \\
sparse & 6 & $10^{4}$ & 4.0 & 1.50 & 100\,\% \\
sparse & 7 & $10^{4}$ & 4.6 & 1.53 & 93\,\% \\
sparse & 8 & $10^{4}$ & 5.2 & 1.55 & 99\,\% \\
sparse & 9 & $10^{4}$ & 5.7 & 1.57 & 98\,\% \\
sparse & 10 & $10^{5}$ & 6.4 & 1.56 & 100\,\% \\
sparse & 11 & $10^{5}$ & 7.3 & 1.51 & 100\,\% \\
sparse & 12 & $10^{5}$ & 7.9 & 1.52 & 100\,\% \\
sparse & 13 & $10^{5}$ & 8.4 & 1.56 & 100\,\% \\
sparse & 14 & $10^{5}$ & 9.0 & 1.55 & 99\,\% \\
sparse & 15 & $10^{5}$ & 9.8 & 1.53 & 100\,\% \\
sparse & 16 & $10^{5}$ & 10.1 & 1.58 & 100\,\% \\
sparse & 24 & $10^{5}$ & 15.1 & 1.59 & 100\,\% \\
sparse & 32 & $10^{6}$ & 20.5 & 1.56 & 100\,\% \\
sparse & 48 & $10^{6}$ & 30.5 & 1.57 & 100\,\% \\
sparse & 64 & $10^{6}$ & 40.5 & 1.58 & 100\,\% \\
sparse & 100 & $10^{7}$ & 63.3 & 1.58 & 100\,\% \\
\midrule
dense & 3 & $10^{3}$ & 2.9 & 1.70 & 54\,\% \\
dense & 4 & $10^{4}$ & 4.0 & 2.91 & 14\,\% \\
dense & 5 & $10^{4}$ & 5.0 & 4.56 & 3\,\% \\
dense & 6 & $10^{4}$ & 6.0 & 5.88 & 1\,\% \\
dense & 7 & $10^{4}$ & 7.0 & 7.00 & 0\,\% \\
dense & 8 & $10^{4}$ & 8.0 & 8.00 & 0\,\% \\
dense & 9 & $10^{4}$ & 9.0 & 8.98 & 0\,\% \\
dense & 10 & $10^{5}$ & 10.0 & 10.00 & 0\,\% \\
dense & 11 & $10^{5}$ & 11.0 & 10.98 & 0\,\% \\
dense & 12 & $10^{5}$ & 12.0 & 12.00 & 0\,\% \\
dense & 13 & $10^{5}$ & 13.0 & 12.99 & 0\,\% \\
dense & 14 & $10^{5}$ & 14.0 & 13.99 & 0\,\% \\
dense & 15 & $10^{5}$ & 15.0 & 15.00 & 0\,\% \\
dense & 16 & $10^{5}$ & 16.0 & 15.99 & 0\,\% \\
dense & 24 & $10^{5}$ & 24.0 & 23.93 & 0\,\% \\
dense & 32 & $10^{6}$ & 32.0 & 31.99 & 0\,\% \\
dense & 48 & $10^{6}$ & 48.0 & 47.96 & 0\,\% \\
dense & 64 & $10^{6}$ & 64.0 & 63.85 & 0\,\% \\
dense & 100 & $10^{7}$ & 100.0 & 99.95 & 0\,\% \\
\bottomrule\end{tabular}}
\newcommand{\DenseRatioFive}{2.78}

\newcommand{\DenseLbRatioLarge}{1.58}
\newcommand{\DenseRatioMidMin}{3.8}
\newcommand{\DenseRatioMidMax}{4.7}
\newcommand{\DenseRatioBigMin}{3.8}
\newcommand{\DenseRatioBigMax}{4.7}
\newcommand{\DenseGapFactor}{2.6}
\newcommand{\SparseMedianTts}{$<$0.01}
\newcommand{\SparseMaxTts}{0.07}
\newcommand{\SparseProvenFast}{179}
\newcommand{\NProvenSparseReal}{187}
\newcommand{\CompTable}{%
\begin{tabular}{lrrrrrrr}\toprule
method & $n$ & exact & shorter & equal & longer & ratio & proven \\ \midrule
Birkhoff+~\cite{valls2021} & 3--100 & 360 & 301 & 59 & 0 & 2.06 & 0 / 208 \\
MIP~\cite{qoblibsub} & 3--6 & 80 & 10 & 70 & 0 & 1.02 & 66 / 61 \\
E-FCFW~\cite{qoblibsub} & 3--8 & 48 & 5 & 43 & 0 & 1.05 & 0 / 48 \\
quantum black box~\cite{qoblibsub} & 4--5 & 24 & 4 & 20 & 0 & 1.03 & 0 / 24 \\
partitioned QAOA~\cite{qoblibsub} & 16 & 1 & 1 & 0 & 0 & 4.12 & 0 / 1 \\
\bottomrule\end{tabular}}
\newcommand{\CompBplusN}{360}
\newcommand{\CompBplusRatio}{2.1}
\newcommand{\CompBplusTime}{0.03}
\newcommand{\CompBplusOurTime}{$<$0.01}
\newcommand{\CompMipN}{80}
\newcommand{\CompMipProven}{66}
\newcommand{\CompMipOurProven}{61}

\newcommand{\CompEfcfwN}{48}

\newcommand{\CompQbN}{24}
\newcommand{\CompQbShorter}{4}

\newcommand{\CompEfcfwShorter}{5}
\newcommand{\CompEfcfwEqual}{43}
\newcommand{\FreqInstance}{B16\_16\_1}
\newcommand{\FreqDistinct}{73}
\newcommand{\FreqWeights}{16}
\newcommand{\ReachTwoMid}{0.4}
\newcommand{\ReachOneMid}{8.2}
\newcommand{\ReachTwoTwentyFour}{2.5}
\newcommand{\ReachOneTwentyFour}{17.7}
\newcommand{\ReachTwoThirtyTwo}{3.1}
\newcommand{\ReachOneThirtyTwo}{14.3}
\newcommand{\FigBoundsXmax}{499.0}
\newcommand{\FigBoundsSeparator}{251.5}
\newcommand{\FigBoundsTicks}{7,33,59,85,111,137,163,189,215,241,263,289,315,341,367,393,419,445,471,494.5}
\newcommand{\FigBoundsTickLabels}{3,5,7,9,11,13,15,24,48,100,3,5,7,9,11,13,15,24,48,100}
\newcommand{\FigBoundsTicksLow}{20,46,72,98,124,150,176,202,228,276,302,328,354,380,406,432,458,484}
\newcommand{\FigBoundsTickLabelsLow}{4,6,8,10,12,14,16,32,64,4,6,8,10,12,14,16,32,64}
\newcommand{\FigBoundsMidSparse}{124}
\newcommand{\FigBoundsMidDense}{377.5}

\newcommand{\Z}{\mathbb{Z}}
\newcommand{\R}{\mathbb{R}}

\newtheorem{theorem}{Theorem}
\newtheorem{lemma}[theorem]{Lemma}
\newtheorem{proposition}[theorem]{Proposition}

\begin{document}

\title{Exact minimum Birkhoff decompositions of doubly stochastic matrices with finite precision}
\author{Andr\'as Cz\'egel\thanks{Bentax Plus}}
\date{}
\maketitle

\begin{abstract}
The minimum Birkhoff decomposition problem asks for a representation of a doubly stochastic matrix as a convex
combination of as few permutation matrices as possible. This NP-hard problem arises in the scheduling of switches in
data centers and satellite systems, in random assignments in economics and in the statistics of rankings. We
consider doubly stochastic matrices whose entries are given with a fixed number of decimal digits and look for exact
decompositions whose weights have the same precision. We propose a parallel method with three parts. For matrices
built from few permutation matrices, the first part reads the weights from the frequencies of the entries and places
them on the matrix positions by a rule that, given the correct weights, never fixes a position wrongly. The second
part computes lower bounds from
pairs of rows or columns, which remain valid for real weights, and from certificates of a satisfiability solver. For
matrices built from many permutation matrices, the third part runs a beam search over steps that each make at least one
entry zero. On \NInst{} test instances with up to $100$ rows, the method proves optimality for \NProven{} instances, and
\NProvenReal{} of these proofs also hold for real weights. Of the \NSparse{} instances built from few permutation
matrices, \NProvenSparse{} are solved to proven optimality, \SparseProvenFast{} of them within one second. Compared with
the best exact decompositions known before, the new decompositions are shorter for \NBetter{} instances and never
longer.
\end{abstract}

\section{Introduction}\label{sec:introduction}

A nonnegative $n\times n$ matrix is \emph{doubly stochastic} if all its row and column sums are equal to one, and a
\emph{permutation matrix} is a $0/1$ matrix with exactly one entry equal to one in every row and every column. The
theorem of Birkhoff and von Neumann states that every doubly stochastic matrix $A$ is a convex combination
$A=\sum_{l=1}^k\lambda_lP_l$ of permutation matrices $P_l$, with weights $\lambda_l>0$ and
$\sum_l\lambda_l=1$~\cite{birkhoff1946,vonneumann1953}. Such a representation is a Birkhoff--von Neumann (BvN)
decomposition, and $k$ is its \emph{length}. The constructive proof of the theorem gives the \emph{Birkhoff algorithm}:
choose a permutation matrix whose ones lie at positions of nonzero entries of the matrix, subtract it with the largest
weight that keeps the matrix nonnegative, and repeat with the remaining matrix.

A doubly stochastic matrix usually has many BvN decompositions, which can differ both in their permutation matrices and
in their length. Since the set of doubly stochastic $n\times n$ matrices, the \emph{Birkhoff polytope}, has dimension
$(n-1)^2$, Carath\'eodory's theorem~\cite{caratheodory1911} guarantees a decomposition of length at most
$(n-1)^2+1$~\cite{marcus1959}. The \emph{minimum Birkhoff decomposition problem} asks for a BvN decomposition of minimum
length. In this paper, we propose a method that computes short exact decompositions and lower bounds on the minimum
length. If both are equal, the decomposition is optimal.

\subsection{Applications}

The main application is the scheduling of switches. A switch with $n$ inputs and $n$ outputs connects every input to
exactly one output at a time, so each of its configurations is a permutation. If the matrix of traffic demands is scaled
to a doubly stochastic matrix, a BvN decomposition is a schedule in which the switch uses configuration $P_l$ for the
fraction $\lambda_l$ of the time. Chang, Chen and Huang~\cite{chang2000} applied this idea to the switches of computer
networks, and Ye et al.~\cite{ye2017} extended such switches by deflection compensation. In data centers, optical
switches transmit data much faster than electronic ones, but every reconfiguration takes between microseconds and
milliseconds~\cite{farrington2010,porter2013}, during which no data can be sent, so schedules with few configurations
save time. Towles and Dally~\cite{towles2003} and Liu et al.~\cite{liu2015} schedule switches with a time cost for every
reconfiguration (see also~\cite{bojja2016,valls2021,tan2021}). Kulkarni, Lee and Singh~\cite{kulkarni2017} name routing
in data centers with reconfigurable switches as the main motivation for the minimum BvN decomposition problem.

The same problem arose earlier in satellite communication. In satellite-switched time division multiple access (SS/TDMA)
systems, a satellite connects ground stations according to a sequence of switching modes, each of which is a
permutation. The traffic matrix is decomposed into a weighted sum of switching modes, with the aim of a short total
transmission time, few switching modes, or both~\cite{balas1983,burkard1985}. Rendl~\cite{rendl1985} and Gopal and
Wong~\cite{gopal1985} showed that minimizing the number of switching modes is NP-hard for variants of this problem.

BvN decompositions also appear in economics, statistics and quantum computing. In economics, a random assignment of $n$
objects to $n$ agents is described by the doubly stochastic matrix of the probabilities that agent $i$ receives object
$j$, and a probability distribution over deterministic assignments with these probabilities is a BvN
decomposition~\cite{budish2013}. In statistics, a probability distribution over the rankings of $n$ items is summarized
by the doubly stochastic matrix of the probabilities that item $i$ has rank $j$, and a distribution with few rankings
that has this matrix is a short BvN decomposition~\cite{jagabathula2011}. In quantum computing, the squared moduli
$|U_{ij}|^2$ of the entries of a unitary matrix $U$ form a doubly stochastic matrix, and Mariella et
al.~\cite{mariella2024} use this to represent transport plans. The Quantum Optimization Benchmarking
Library~\cite{koch2025} contains the minimum Birkhoff decomposition problem as a test problem for quantum and classical
methods.

\subsection{Related work}

Dufoss\'e and U\c{c}ar~\cite{dufosse2016} showed that the minimum BvN decomposition problem is NP-hard. Kulkarni, Lee
and Singh~\cite{kulkarni2017} studied the variant in which the matrix is written as a combination of sub-permutation
matrices, that is, of matchings that need not be perfect. They showed that this variant is NP-hard even when the optimal
length is a constant $k\ge4$, so it is not fixed-parameter tractable in $k$ unless $\mathrm{P}=\mathrm{NP}$, and they
gave an approximation algorithm with ratio $O(\log k)$. Brualdi~\cite{brualdi1982} and Johnson, Dulmage and
Mendelsohn~\cite{johnson1960} bounded the number of permutation matrices that the Birkhoff algorithm uses in terms of
the nonzero pattern. We know of no lower bound in this literature beyond the largest number of nonzero entries in a row
or a column, which we call the degree in Section~\ref{sec:problem}.

Most methods used in practice are variants of the Birkhoff algorithm that differ in the choice of the permutation matrix
in each step. The usual choice is the permutation matrix with the largest possible weight, the \emph{bottleneck}
permutation, found by a bottleneck matching algorithm~\cite{gabow1988,duff2001,panagiotas2023}. Dufoss\'e and
U\c{c}ar~\cite{dufosse2016} studied this rule, and Dufoss\'e, Kaya, Panagiotas and U\c{c}ar~\cite{dufosse2018} showed
that every heuristic of this type can miss the optimum, as Brualdi~\cite{brualdi1982} had suggested. Liu et
al.~\cite{liu2015} use thresholds to choose the permutation matrices. Valls, Iosifidis and Tassiulas~\cite{valls2021}
bound the length that the Birkhoff algorithm produces for approximate decompositions and propose Birkhoff+, which
chooses each permutation matrix by a linear program inspired by the Frank--Wolfe method. In the TDMA literature,
Inukai~\cite{inukai1979} gave an algorithm with few switching modes, Rote and Vogel~\cite{rote1993} proposed a heuristic
based on scaling and edge coloring, and Prais and Ribeiro~\cite{prais2000} used a reactive greedy randomized adaptive
search procedure (GRASP).

Exact methods start with Minoux~\cite{minoux1986}, who formulated the assignment of a traffic matrix as a set covering
problem over switching modes and solved its linear relaxation by column generation. Ribeiro, Minoux and
Penna~\cite{ribeiro1989} built an exact column generation method on this formulation. Mixed-integer programs with one
binary variable per permutation matrix, or with a fixed number of slots for permutation matrices, solve small instances,
and Koch et al.~\cite{koch2025} report results of CPLEX for such a model. Hendrych et al.~\cite{hendrych2025} combine
branch-and-bound with Frank--Wolfe methods for convex mixed-integer problems, which include sparse approximate
decompositions over the Birkhoff polytope.

Frank--Wolfe (conditional gradient) methods produce sparse convex combinations of the vertices of a polytope. For the
Birkhoff polytope, the vertices are the permutation matrices, and the linear subproblem is an assignment
problem~\cite{jaggi2013,braun2019,combettes2023}. These methods are fast and give good approximate decompositions but
rarely reach an exact one: among the results submitted to the library~\cite{qoblibsub}, an extended Frank--Wolfe method
gives exact decompositions for \CompEfcfwN{} of its $131$ instances (Section~\ref{sec:comparison}).

For the recovery of sparse mixtures, the sparsest-fit algorithm of Jagabathula and Shah~\cite{jagabathula2011} processes
the entries in increasing order and decides for each entry whether it is a new weight or a sum of known weights. They
showed that a weighted sum of few random permutation matrices can be decomposed exactly in this way if two conditions
hold: every permutation matrix has a one at a position where no other permutation matrix has a one, a \emph{witness
position}, and no two different subsets of the weights have the same sum.

\subsection{Contributions}

The entries of our test instances have $3$ to $7$ decimal digits, and we look for exact decompositions whose weights
have the same precision. The mixed-integer model of Koch et al.~\cite{koch2025} uses the same precision, and in the TDMA
literature the weights are numbers of time slots~\cite{inukai1979,rote1993}. After scaling, all computations can be
carried out in exact integer arithmetic. Real weights can only allow shorter decompositions, so this matters only for
the lower bounds. For every bound, we state whether it also holds for real weights.

Our method rests on one observation: in every decomposition, each row and each column of the matrix splits the list of
weights into groups whose sums are the entries of that row or column. We call the method RCS, after its three parts
Recover, Certify and Search. Recover places the weights in matrices built from few permutation matrices, where most
weights appear directly as entries. Certify proves lower bounds for all matrices. Search builds decompositions step by
step in matrices built from many permutation matrices, where no weight is visible. The parts run in parallel, and a run
stops as soon as the best decomposition reaches the lower bound. The contributions of this paper are the following.
\begin{itemize}
\item We give a rule, fixing by unique subset sums, that places known weights on the positions of the matrix and, if the
  weights are correct, fixes every position correctly.
\item We derive a lower bound from two rows or two columns that is valid for real weights. It is computed exactly when
  few values remain after equal values have been paired, and it is replaced by a weaker bound from a linear program
  otherwise. We also give a model for the Boolean satisfiability problem (SAT) that decides whether $k$ permutation
  matrices with weights of precision $1/s$ suffice. For small matrices built from many permutation matrices, we split
  this question into many small SAT problems, many of which are removed by a lower bound on the number of terms at each
  position.
\item We propose a beam search over zeroing steps, where a zeroing step subtracts a permutation matrix with the largest
  weight that keeps all entries nonnegative, so that at least one entry becomes zero. For $n\le7$, the search considers
  every permutation matrix in each step. We also give a tie-break rule for the Birkhoff algorithm that removes as many
  binary digits of the remaining entries as possible and shortens the decompositions of dense matrices with $n\ge32$ by
  $3$ to $6\,\%$.
\item We report a computational study on \NInst{} test instances with $n$ up to $100$, carried out with an
  implementation in Rust, including a comparison with the exact decompositions that other groups have submitted to the
  library for these instances.
\end{itemize}

Appendix~\ref{app:proofs} contains the proofs, Appendix~\ref{app:instances} describes the test instances, and
Appendix~\ref{app:results} gives the results per family of instances.

\section{Problem and basic properties}\label{sec:problem}

\subsection{Decompositions with finite precision}

Let $A$ be a doubly stochastic $n\times n$ matrix whose entries are multiples of $1/s$ for some integer $s\ge1$. For
data with $d$ decimal digits, $s=10^d$. We work with the scaled matrix $D=sA\in\Z_{\ge0}^{n\times n}$, whose row and
column sums are all equal to $s$. For a permutation $\pi$ of $\{1,\dots,n\}$, let $P_\pi$ denote the permutation matrix
with ones at the positions $(i,\pi(i))$. A \emph{decomposition} of $D$ of length $k$ is a list $(c_l,\pi_l)_{l=1}^k$ of
weights $c_l>0$ and permutations $\pi_l$ such that $D=\sum_{l=1}^kc_lP_{\pi_l}$, and the pairs $(c_l,\pi_l)$ are its
\emph{terms} (Figure~\ref{fig:example}). It follows that $\sum_lc_l=s$, and the weights $\lambda_l=c_l/s$ give the BvN
decomposition $A=\sum_l\lambda_lP_{\pi_l}$. A decomposition has \emph{precision $1/s$} if all $c_l$ are integers, that
is, if all weights $\lambda_l$ are multiples of $1/s$, like the entries of $A$. The scaling does not change the problem.
It only allows exact integer arithmetic without rounding errors. In what follows, the weights are the numbers $c_l$,
that is, the weights $\lambda_l$ in units of $1/s$. We denote by $k^*(D)$ the minimum length of a decomposition with
precision $1/s$ and by $k^*_{\R}(D)$ the minimum length of a decomposition with real weights, so that
\[ k^*_{\R}(D)\;\le\;k^*(D). \]
RCS searches for decompositions with precision $1/s$, and \emph{optimal} means of length $k^*(D)$. We do not know
whether $k^*_{\R}(D)$ and $k^*(D)$ differ for any of our test instances.

A \emph{line} is a row or a column, and the \emph{degree} $\deg(D)$ is the largest number of nonzero entries in a line.
A term $(c_l,\pi_l)$ \emph{covers} the $n$ positions $(i,\pi_l(i))$. If some terms have been chosen such that no entry
becomes negative, the \emph{residual} $R=D-\sum_{\text{chosen}}c_lP_{\pi_l}$ is the part of $D$ that remains to be
decomposed. It is again a nonnegative matrix with equal line sums, so all statements about $D$ also hold for residuals.

\begin{figure}[t]\centering
\begin{tikzpicture}[
  mat/.style={matrix of math nodes, left delimiter=(, right delimiter=), inner sep=0.3pt, column sep=0pt,
    row sep=0pt, nodes={minimum width=1.15em, minimum height=1.45em, anchor=center, inner sep=0pt}},
  tA/.style={text=termA, font=\boldmath}, tB/.style={text=termB, font=\boldmath},
  tC/.style={text=termC, font=\boldmath}, zero/.style={text=black!40},
  dA/.style={circle, fill=termA, inner sep=0pt, minimum size=2.4pt},
  dB/.style={rectangle, fill=termB, inner sep=0pt, minimum size=2.2pt},
  dC/.style={regular polygon, regular polygon sides=3, fill=termC, inner sep=0pt, minimum size=3.8pt}]
\matrix[mat] (D) {5 & 5 & |[zero]|0 & |[zero]|0\\ 3 & 5 & 2 & |[zero]|0\\ |[zero]|0 & |[zero]|0 & 8 & 2\\
  2 & |[zero]|0 & |[zero]|0 & 8\\};
\node[anchor=east] at ([xshift=-5pt]D.west) {$D=$};
\node[anchor=west] (e1) at ([xshift=5pt]D.east) {$=\textcolor{termA}{5}\,\cdot$};
\matrix[mat, anchor=west] (P1) at ([xshift=5pt]e1.east)
  {|[tA]|1 & |[zero]|0 & |[zero]|0 & |[zero]|0\\ |[zero]|0 & |[tA]|1 & |[zero]|0 & |[zero]|0\\
   |[zero]|0 & |[zero]|0 & |[tA]|1 & |[zero]|0\\ |[zero]|0 & |[zero]|0 & |[zero]|0 & |[tA]|1\\};
\node[anchor=west] (e2) at ([xshift=5pt]P1.east) {$+\textcolor{termB}{3}\,\cdot$};
\matrix[mat, anchor=west] (P2) at ([xshift=5pt]e2.east)
  {|[zero]|0 & |[tB]|1 & |[zero]|0 & |[zero]|0\\ |[tB]|1 & |[zero]|0 & |[zero]|0 & |[zero]|0\\
   |[zero]|0 & |[zero]|0 & |[tB]|1 & |[zero]|0\\ |[zero]|0 & |[zero]|0 & |[zero]|0 & |[tB]|1\\};
\node[anchor=west] (e3) at ([xshift=5pt]P2.east) {$+\textcolor{termC}{2}\,\cdot$};
\matrix[mat, anchor=west] (P3) at ([xshift=5pt]e3.east)
  {|[zero]|0 & |[tC]|1 & |[zero]|0 & |[zero]|0\\ |[zero]|0 & |[zero]|0 & |[tC]|1 & |[zero]|0\\
   |[zero]|0 & |[zero]|0 & |[zero]|0 & |[tC]|1\\ |[tC]|1 & |[zero]|0 & |[zero]|0 & |[zero]|0\\};
\foreach \r/\c/\mk/\dx in {1/1/dA/0, 1/2/dB/-2.3, 1/2/dC/2.3, 2/1/dB/0, 2/2/dA/0, 2/3/dC/0,
    3/3/dA/-2.3, 3/3/dB/2.3, 3/4/dC/0, 4/1/dC/0, 4/4/dA/-2.3, 4/4/dB/2.3}
  \node[\mk] at ($(D-\r-\c.south)+(\dx pt,1.8pt)$) {};
\node[dA, scale=1.3] at ([yshift=4pt]P1.north) {};
\node[dB, scale=1.3] at ([yshift=4pt]P2.north) {};
\node[dC, scale=1.3] at ([yshift=4pt]P3.north) {};
\end{tikzpicture}
\caption{A decomposition of a $4\times4$ matrix $D$, a doubly stochastic matrix scaled by $10$, into three terms with
the weights $5$, $3$ and $2$. Each term has its own marker, shown above its permutation matrix. Below each entry of $D$,
there is one marker for every term whose permutation matrix has a one at this position. Row~1 splits the weights into
the groups $\{5\}$ and $\{3,2\}$, and row~3 into $\{5,3\}$ and $\{2\}$. Row~2 has three nonzero entries, so every
decomposition needs at least three terms, and this one is optimal.}\label{fig:example}
\end{figure}

The following lemma is the basis of our method. Its proof, like all other proofs, is given in Appendix~\ref{app:proofs}.

\begin{lemma}[line partitions]\label{lem:partition}
Let $(c_l,\pi_l)_{l=1}^k$ be a decomposition of $D$ with real weights $c_l>0$, and let $I_{ij}=\{l:\pi_l(i)=j\}$ for
every position $(i,j)$. Then $D_{ij}=\sum_{l\in I_{ij}}c_l$ for every position $(i,j)$, and $I_{ij}=\emptyset$ if and
only if $D_{ij}=0$. For every row $i$, the sets $I_{ij}$, $j=1,\dots,n$, with $D_{ij}>0$ form a partition of
$\{1,\dots,k\}$, and for every column $j$, the sets $I_{ij}$, $i=1,\dots,n$, with $D_{ij}>0$ do the same. In particular,
the weights sum to $s$, and every decomposition has at least $\deg(D)$ terms, so that $k^*(D)\ge k^*_{\R}(D)\ge\deg(D)$.
\end{lemma}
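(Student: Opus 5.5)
The plan is to read everything off entrywise from the identity $D=\sum_{l=1}^k c_lP_{\pi_l}$. Entry $(i,j)$ of $P_{\pi_l}$ equals $1$ if $\pi_l(i)=j$ and $0$ otherwise. Taking entry $(i,j)$ on both sides therefore gives $D_{ij}=\sum_{l:\pi_l(i)=j}c_l=\sum_{l\in I_{ij}}c_l$, which is the first claim. The equivalence $I_{ij}=\emptyset\iff D_{ij}=0$ then follows from the positivity of the weights. An empty index set gives the empty sum $0$. A nonempty one gives a sum of strictly positive reals, hence $D_{ij}>0$. This is the only place where $c_l>0$ is used, and it is what makes the argument work for real weights as well as integer ones.

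Next come the partitions. Fix a row $i$. Since each $\pi_l$ is a function, every index $l\in\{1,\dots,k\}$ lies in exactly one set $I_{ij}$, namely the one with $j=\pi_l(i)$. So the sets $I_{i1},\dots,I_{in}$ are pairwise disjoint with union $\{1,\dots,k\}$. By the previous step the empty ones are exactly those with $D_{ij}=0$. Discarding them leaves the sets with $D_{ij}>0$, which are nonempty and form a genuine partition. For a column $j$ the argument is the same with $I_{ij}=\{l:\pi_l^{-1}(j)=i\}$. Each $l$ lies in the set with $i=\pi_l^{-1}(j)$, and this $i$ is unique because $\pi_l$ is a bijection.

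The consequences follow by summing over the partition of a row. From the row-$i$ partition, $s=\sum_j D_{ij}=\sum_{j:D_{ij}>0}\sum_{l\in I_{ij}}c_l=\sum_{l=1}^k c_l$, so the weights sum to $s$. Now take a line with $\deg(D)$ nonzero entries. Its partition has $\deg(D)$ nonempty blocks, which are disjoint subsets of $\{1,\dots,k\}$, so $k\ge\deg(D)$. This holds for every real decomposition, so $k^*_{\R}(D)\ge\deg(D)$. The remaining inequality $k^*(D)\ge k^*_{\R}(D)$ holds because every decomposition with precision $1/s$ is in particular one with real weights.

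There is no real obstacle here. The only point needing care is the bookkeeping of empty sets: the full family $\{I_{ij}\}_j$ may contain empty members, while a partition should consist of nonempty blocks. This is handled by restricting to the positions with $D_{ij}>0$, as the statement does.
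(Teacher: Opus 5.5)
Your proof is correct and follows the paper's argument essentially step by step. Both compare entries to get $D_{ij}=\sum_{l\in I_{ij}}c_l$, use $c_l>0$ for the emptiness criterion, and use that each $\pi_l$ meets every row and every column exactly once to get the partitions after dropping the empty sets. Both then sum a row to get $\sum_l c_l=s$, and count the nonempty blocks of a line with $\deg(D)$ nonzero entries to get $k\ge\deg(D)$, with $k^*(D)\ge k^*_{\R}(D)$ because every decomposition with precision $1/s$ is also a real one.
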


A decomposition with precision $1/s$ always exists. In the bipartite multigraph with the rows on one side, the columns
on the other side and $D_{ij}$ parallel edges between row $i$ and column $j$, every vertex has degree $s$, so by
K\H{o}nig's edge coloring theorem~\cite{konig1916} the edges split into $s$ perfect matchings, each of which is a
permutation. In these terms, a decomposition with precision $1/s$ and length $k$ corresponds to a coloring of the edges
with $s$ colors in which every color class is a perfect matching and at most $k$ different perfect matchings occur, with
the matching $\pi_l$ used for $c_l$ colors.

\subsection{Sparse and dense matrices}

Two classes of matrices require different methods. We call the permutations from which a test matrix was built its
\emph{generating permutations}. A matrix built from few generating permutations, for example from $n$ random
permutations with random weights, is \emph{sparse} in our sense. Almost every weight then appears as an entry in about
$37\,\%$ of the rows, and $k=n$ is optimal for most such matrices. A matrix built from many generating permutations, for
example from $n^2$ of them, is \emph{dense}. Every entry is then a sum of many weights, no weight is visible, and the
minimum length is much smaller than the number of generating permutations. RCS does not know the generating permutations
and \emph{treats} a matrix as sparse if at least $45\,\%$ of its nonzero entries have a value that occurs more than once
in the matrix, and as dense otherwise. In Section~\ref{sec:method}, sparse and dense refer to this criterion, and in
Section~\ref{sec:results} and the appendices to the class of the instance generator. The two agree for all test instances
except four with $n\le4$, and the threshold is not critical, since every threshold between $41\,\%$ and $51\,\%$ gives
the same classification.

\section{The method RCS}\label{sec:method}

\begin{algorithm}[t]
\caption{One run of RCS.}\label{alg:overview}
\begin{algorithmic}[1]
\Require scaled matrix $D=sA$ (integer entries, all line sums equal to $s$), time limit
\State $U\gets$ Birkhoff algorithm with the zeroing rule on $D$ \Comment{Section~\ref{sec:search}}
\State $L\gets\deg(D)$ \Comment{lower bound}
\While{$U$ has more than $L$ terms and time is left} \Comment{in parallel}
\State Recover ($D$ treated as sparse): find the weights, fix them by unique subset sums (Section~\ref{sec:recover})
\State Certify: raise $L$ by the two-line bound and by SAT certificates (Section~\ref{sec:certify})
\State Search ($D$ treated as dense): beam search, SAT re-optimization of the tail (Section~\ref{sec:search})
\State further threads: randomized restarts, large neighborhood search (Section~\ref{sec:search})
\EndWhile
\State \Return $U$, $L$ and the time of every improvement of $U$
\end{algorithmic}
\end{algorithm}

\subsection{Overview}\label{sec:overview}

A run of RCS has a time limit and consists of two \emph{chains} that work in parallel (Algorithm~\ref{alg:overview}),
each of which is a group of threads that run one algorithm each. The \emph{solution chain} consists of Recover and
Search. Its threads offer decompositions to a shared best decomposition $U$, and every offer is checked in integer
arithmetic before it is accepted. The \emph{bound chain} consists of Certify, which raises a shared lower bound $L$ on
$k^*(D)$ step by step. The run stops as soon as $U$ has length $L$, because $U$ is then optimal. Otherwise, it stops at
the time limit and reports $U$ and $L$.

\subsection{Recover: sparse matrices}\label{sec:recover}

Recover first finds the weights and then places them on the positions of $D$.

\subsubsection*{Frequency rule}
If the matrix is built from $n$ random permutations, a given one of them is the only one
through its position in a given row with probability $(1-1/n)^{n-1}\approx\mathrm{e}^{-1}$. Its weight then appears as
an entry about $n/\mathrm{e}$ times, whereas a sum of two weights appears only at the positions where the two
permutations meet, so the weights are the most frequent values (Figure~\ref{fig:freq}). We sort the distinct values of the
entries by frequency, most frequent first, and take the shortest prefix of this list whose sum is exactly $s$, provided
that the prefix ends where the frequency drops. A value that occurs about twice as often as a typical weight counts
twice, because two weights can be equal. If this rule fails, we scan the values in increasing order, as the sparsest-fit
algorithm~\cite{jagabathula2011} does, and take a value as a new weight unless it is the sum of at most three known
weights. As long as the best decomposition has more than $n$ terms, a depth-first \emph{enumeration of weight sets} also
tries other weight sets, in order of frequency.

\begin{figure}[t]\centering
\begin{tikzpicture}
\begin{axis}[width=\linewidth, height=4.4cm, ybar, bar width=1.2pt, bar shift=0pt, ymin=0, ymax=11.5, xmin=0,
  xmax=\FreqDistinct+1, xtick=\empty, ytick={0,2,...,10}, xlabel={distinct values of the nonzero entries, sorted by
  frequency}, ylabel={occurrences}, ymajorgrids, legend style={at={(0.98,0.95)}, anchor=north east},
  legend image code/.code={\fill[#1] (0cm,-0.1cm) rectangle (0.25cm,0.1cm);}]
\addplot[fill=lbcolor, draw=none] table[x=rank, y=count, restrict expr to domain={\thisrow{weight}}{1:1}]
  {fig_freq.csv};
\addlegendentry{weight of the decomposition}
\addplot[fill=black!30, draw=none] table[x=rank, y=count, restrict expr to domain={\thisrow{weight}}{0:0}]
  {fig_freq.csv};
\addlegendentry{other value (a sum of weights)}
\end{axis}
\end{tikzpicture}
\caption{Frequencies of the values in the sparse test instance \FreqInstance, a $16\times16$ matrix built from $16$
random permutation matrices. The bars show the \FreqDistinct{} distinct values of the nonzero entries, sorted by their
number of occurrences. The \FreqWeights{} weights of the decomposition found by RCS (dark bars) are exactly the
\FreqWeights{} most frequent values. All other values (light bars) are sums of weights and occur at most three
times.}\label{fig:freq}
\end{figure}

\subsubsection*{Fixing by unique subset sums}
Let the weights $c_1\le\dots\le c_k$ be given, and let
$F_i,G_j\subseteq\{1,\dots,k\}$ be the index sets of the weights that are not yet placed in row $i$ and in column $j$,
where equal weights count as different indices. A position is \emph{open} until its set of weights is fixed. We visit
the open positions in increasing order of their entries and, for a position $(i,j)$, list the subsets of $F_i\cap G_j$
with sum $D_{ij}$, stopping after two. If there is exactly one such subset, the position is fixed to it, and its indices
are removed from $F_i$ and $G_j$. If there is none, the weights do not fit $D$. If there are two, or if the enumeration
exceeds a fixed budget, the position stays open. These passes are repeated until no position changes, and a depth-first
search decides the positions that are still open. If the weights are correct, every fixed position is correct
(Theorem~\ref{thm:sweep}). The increasing order keeps the lists short, because the positions with smaller entries in the
row and the column of a visited position are usually fixed already, so only a few small weights remain to be combined.

\subsubsection*{Largest weight first}
A faster alternative places the weights one at a time, largest first, by solving one
assignment problem per weight. Since this procedure can fail, its result is accepted only if the final residual is zero.

\subsection{Certify: lower bounds}\label{sec:certify}

While the solution chain shortens the best decomposition, Certify raises the lower bound $L$ on $k^*(D)$ in three
stages: the degree (Lemma~\ref{lem:partition}), the two-line bound and SAT certificates. The degree and the two-line
bound depend only on the entries of $D$ and hold for real weights, so they also bound $k^*_{\R}(D)$, whereas the SAT
certificates hold only for weights with precision $1/s$.

\subsubsection*{Two-line bound}
Take two lines, and let $a$ and $b$ be the multisets of their nonzero entries, with $p$ and
$q$ elements. Every term covers one position in each of the two lines and thus links these two positions. Positions that
are linked, directly or through other positions, form groups in which the entries of the two lines have the same sum,
and we call any set of positions of the two lines with this property a \emph{block}. Since the linked groups are blocks
and a group with $r$ positions needs at least $r-1$ links,
\[ k^*(D)\;\ge\;k^*_{\R}(D)\;\ge\;p+q-\gamma^*(a,b), \]
where the \emph{block number} $\gamma^*(a,b)$ is the largest number of blocks into which the entries of the two lines
can be split (Proposition~\ref{prop:twoline}). Figure~\ref{fig:twoline} shows the bound for the example of
Figure~\ref{fig:example}.

\begin{figure}[t]\centering
\begin{tikzpicture}[
  cell/.style={draw, rounded corners=2pt, minimum width=2.2em, minimum height=1.5em, fill=white},
  link/.style={very thick}]
\node[font=\small] at (0,0.8) {row 2};
\node[font=\small] at (3.6,0.8) {row 4};
\node[cell] (a1) at (0,0) {$3$};
\node[cell] (a2) at (0,-0.8) {$5$};
\node[cell] (a3) at (0,-2.1) {$2$};
\node[cell] (b1) at (3.6,-0.4) {$8$};
\node[cell] (b2) at (3.6,-2.1) {$2$};
\draw[link, termB] (a1) -- node[above, font=\small] {$3$} (b1);
\draw[link, termA] (a2) -- node[below, font=\small] {$5$} (b1);
\draw[link, termC] (a3) -- node[below, font=\small] {$2$} (b2);
\begin{scope}[on background layer]
\node[draw=black!45, dashed, rounded corners=4pt, fill=black!4, fit=(a1)(a2)(b1), inner sep=4pt] (B1) {};
\node[draw=black!45, dashed, rounded corners=4pt, fill=black!4, fit=(a3)(b2), inner sep=4pt] (B2) {};
\end{scope}
\node[anchor=west, font=\small] at ([xshift=3pt]B1.east) {block: $3+5=8$};
\node[anchor=west, font=\small] at ([xshift=3pt]B2.east) {block: $2=2$};
\end{tikzpicture}
\caption{The two-line bound for rows~2 and~4 of the matrix in Figure~\ref{fig:example}. The boxes are the nonzero entries
of the two rows. Each edge stands for one term of the decomposition in Figure~\ref{fig:example}, a permutation matrix with
its weight. It joins the two positions that the permutation matrix covers in these rows and carries the weight. The
connected positions form two blocks, shown as dashed frames, and in each block the entries of the two rows have equal
sums. The entries cannot be split into more than two such blocks. Five positions in two blocks need at least $5-2=3$
edges, so every decomposition has at least three terms, also with real weights.}\label{fig:twoline}
\end{figure}

Equal values of the two lines can always form blocks of their own, so $\gamma^*(a,b)=M_2+\gamma^*(a',b')$
(Lemma~\ref{lem:equal}), where $M_2$ is the largest number of disjoint pairs of equal values with one value from each
line, and $a'$ and $b'$ are the multisets of the \emph{unmatched values} of $a$ and $b$. How we compute the block number
of the unmatched values depends on how many there are (Propositions~\ref{prop:blockcount} and~\ref{prop:packing}).

With up to $22$ unmatched values, we compute $\gamma^*(a',b')$ exactly and call the result the \emph{exact block
number}. We give the values of $a'$ the sign $+$ and those of $b'$ the sign $-$. Then $\gamma^*(a',b')$ is the largest
number of nonempty zero-sum prefixes over all orderings of the signed values, which a dynamic program over the $2^m$
subsets of the $m$ values computes. Its table has nine bytes per subset, one for the value $\mathrm{dp}(S)$ and eight
for the sum of $S$, so $m=22$ keeps it at $38$\,MB and the time per pair of lines at about $0.04$\,s. Each further value
doubles both, and the bound chain may visit thousands of pairs. With more unmatched values, we use a \emph{packing
bound}. Every block of unmatched values has at least three values. We list the blocks with three to $T$ values, and all
other blocks have at least $T+1$ values. A linear program over the small blocks then gives an upper bound on
$\gamma^*(a',b')$. Its dual solution is rounded up and checked in integer arithmetic, so rounding can only weaken the
bound. If the packing bound does not raise $L$ to the length of the best decomposition and there are at most $44$
unmatched values, we compute the exact block number in a second way. A meet-in-the-middle method~\cite{horowitz1974}
splits the values into two halves of at most $22$ elements and lists all zero-sum subsets if there are at most two
million of them (otherwise it gives up), and a recursion over the element with the smallest index gives
$\gamma^*(a',b')$.

For all pairs of lines, we first compute a quick upper bound on the block number: a block of three unmatched values
needs a value of one line that equals the sum of two values of the other line, which we find by comparing sums of pairs,
and all other blocks have at least four values. The stronger methods are then applied to the pairs of lines that can
still raise the bound, the most promising pairs first, with at most half of the remaining time. Since they never closed
the gap on dense matrices with $n\ge24$ in our tests, they get at most $2$\,s there.

\subsubsection*{SAT certificates}
For a given $k$, a SAT model decides whether $D$ has a decomposition with precision $1/s$
and $k$ terms. A Boolean variable $y_{l,e}$ states that term $l$ covers position $e$, and each integer weight $c_l$ is
encoded as a binary number. The constraints require that each term covers exactly one position of every line, that every
weight is at least one, that a term covering position $e$ has weight at most $D_e$, that the weights of the terms
covering a position add up to its entry, and that the weights are sorted. The weights then also sum to $s$. The sums are
encoded by binary adders (Section~\ref{sec:implementation}). If the model for $k\le s$ terms is unsatisfiable, then
$k^*(D)>k$ (Proposition~\ref{prop:sat}). Such a certificate concerns only decompositions with the fixed, finite
precision $1/s$ of the data and does not exclude shorter decompositions with arbitrary real weights.

For $n\le7$, we split the question into cases in the style of cube-and-conquer~\cite{heule2011}. For one chosen line, a
case fixes how many terms cover each of its positions and sorts the weights within each of these groups instead of
sorting all weights. A term that covers position $e$ weighs at most $U_e$, the largest value of $\min_iD_{i\pi(i)}$ over
the permutations $\pi$ through $e$, so at least $\lceil D_e/U_e\rceil$ terms cover $e$, and the cases below these
numbers can be skipped. After its terms are sorted by their position in the chosen line and by weight within a position,
every decomposition with $k$ terms belongs to one of the remaining cases. Hence, if all of them are unsatisfiable and
$k\le s$, then $k^*(D)>k$ (Propositions~\ref{prop:sat} and~\ref{prop:cubes}). The cases are independent and are solved
in parallel. Certify calls the SAT model for $k=L,L+1,\dots$ and raises $L$ by one after every unsatisfiable answer. It
does so for matrices treated as sparse with $n\le32$ and for matrices treated as dense with $n\le9$. On larger matrices
treated as dense, the SAT stage never raised a bound in our tests, so Certify stops after the two-line bound.

\subsection{Search: dense matrices}\label{sec:search}

In dense matrices no weight is visible, so the Search threads build decompositions step by step. Their rules are used
for all matrices, but they are the main tool only for dense ones.

\subsubsection*{Zeroing rule}
Let $\pi$ be a permutation whose positions all have positive entries in the residual $R$. The
largest weight with which $P_\pi$ can be subtracted from $R$ without making an entry negative is $\min_iR_{i\pi(i)}>0$,
and we call this subtraction a \emph{zeroing step}, because it makes at least one entry zero. The bottleneck version of
the Birkhoff algorithm makes zeroing steps with the largest possible weight $\theta=\max_\pi\min_iR_{i\pi(i)}$, the
\emph{bottleneck value} of $R$. From now on, the Birkhoff algorithm means this bottleneck version. Among all
permutations $\pi$ with $\min_iR_{i\pi(i)}=\theta$, the \emph{zeroing rule} chooses one that covers the largest number
of positions with entry $\theta$, so that as many entries as possible become zero. This choice requires one assignment
problem per step. The Birkhoff algorithm with the zeroing rule gives the first decomposition of every run.

\subsubsection*{Logarithmic tie-break rule}
The zeroing rule does not determine which permutation to take if several
permutations zero the same number of positions. Among these, the \emph{logarithmic tie-break rule} chooses a permutation
$\pi$ that maximizes
\[ \sum_{e\in\pi:\;R_e>\theta}\log_2\frac{R_e}{R_e-\theta}, \]
where we write $e\in\pi$ for the positions $e$ covered by $\pi$, and the sum runs over those whose entry exceeds
$\theta$. Since $\log_2R_e$ is about the number of binary digits of $R_e$, the rule removes as many digits of the
remaining entries as possible, while each step remains a zeroing step with weight $\theta$ that requires one assignment
problem. On dense matrices with $n\ge32$, one deterministic run of the Birkhoff algorithm with this rule gives
decompositions that are $3$ to $6\,\%$ shorter than with the zeroing rule alone.

Not every decomposition is a sequence of zeroing steps. A decomposition has a \emph{zeroing order} if its indices can be
ordered as $l_1,\dots,l_k$ such that, for every $t$, some position covered by $\pi_{l_t}$ is covered by none of the
later permutations $\pi_{l_{t'}}$, $t'>t$. A decomposition with a zeroing order is a sequence of zeroing steps
(Proposition~\ref{prop:zeroing}). Conversely, in every sequence of zeroing steps, the entry that step $t$ makes zero is
covered by no later term, so the order of the steps is a zeroing order.\footnote{For the ten dense test instances with
$n=5$, the library provides decompositions with precision $1/s$ computed by a mixed-integer solver, six of them proven
optimal. Eight of the ten have a zeroing order, and for the other two, the beam search finds different decompositions of
the same length that have one. Some of the best known decompositions for $n=6$ have no zeroing order.}

\subsubsection*{Beam search}
Beam search~\cite{lowerre1976,ow1988} is a breadth-first tree search that keeps only the $W$
best nodes of each level, where $W$ is the \emph{width}. In our search tree, the root is $D$, a node at depth $t$ is a
residual after $t$ zeroing steps, and a \emph{successor} of a node is the residual after one more zeroing step. For
$n\le7$, we use the zeroing steps with all permutations in the nonzero pattern of the residual, taken from the list of
the at most $7!=5040$ permutations in the nonzero pattern of $D$. By Proposition~\ref{prop:zeroing}, every decomposition
with a zeroing order is then a path in this tree. For larger $n$, there is one successor for each position $e$ with
$0<R_e\le\theta$, which uses a permutation that covers $e$, covers only positions with entries of at least $R_e$, and
covers the largest number of positions with entry $R_e$. A position without such a permutation gives no successor, but
the positions with entry $\theta$ on the bottleneck permutation always give one.

Figure~\ref{fig:beam} shows the first two levels of a small search tree. Successors with the same residual are merged,
which we detect by a hash value of the residual (Section~\ref{sec:implementation}). A successor is pruned if its depth
plus the degree of its residual is at least the length of the best known decomposition, because the residual needs at
least that many further terms (Lemma~\ref{lem:partition}). The other successors are scored by their depth plus the
number of steps that the Birkhoff algorithm with the zeroing rule needs to decompose their residual completely. We call
these steps the \emph{greedy completion} of the successor. The $W$ successors with the lowest scores form the next
level. The steps on the path to a successor, followed by its greedy completion, form a full decomposition. When all
greedy completions of a level are computed, every decomposition that improves the best one is offered to the shared best
decomposition. When the tree is exhausted, the search starts again with three times the width, until the time limit.

\begin{figure}[t]\centering
\begin{tikzpicture}[
  st/.style={draw, circle, minimum size=6mm, inner sep=0pt, font=\small},
  kept/.style={st, fill=lbcolor!15, draw=lbcolor, thick},
  drop/.style={st, draw=black!45, text=black!60, dashed},
  cut/.style={st, draw=kcolor, text=kcolor},
  ed/.style={-{Stealth[length=1.5mm]}, black!55}]
\node[font=\small, anchor=east] at (-3.3,0) {depth $0$};
\node[font=\small, anchor=east] at (-3.3,-1.4) {depth $1$};
\node[font=\small, anchor=east] at (-3.3,-2.8) {depth $2$};
\node[kept] (r) at (0,0) {$D$};
\node[drop] (a) at (-2.6,-1.4) {$13$};
\node[kept] (b) at (-1.3,-1.4) {$11$};
\node[kept] (c) at (0,-1.4) {$11$};
\node[drop] (d) at (1.3,-1.4) {$12$};
\node[cut] (e) at (2.6,-1.4) {$\times$};
\foreach \x in {a,b,c,d,e} \draw[ed] (r) -- (\x);
\node[drop] (f) at (-2.6,-2.8) {$12$};
\node[kept] (g) at (-1.3,-2.8) {$11$};
\node[kept] (h) at (0.3,-2.8) {$10$};
\node[cut] (i) at (1.6,-2.8) {$\times$};
\node[drop] (j) at (2.9,-2.8) {$12$};
\draw[ed] (b) -- (f); \draw[ed] (b) -- (g); \draw[ed] (b) -- (h); \draw[ed] (c) -- (h); \draw[ed] (c) -- (i);
\draw[ed] (c) -- (j);
\node[draw, dashed, rounded corners=3pt, font=\small, align=center, inner sep=3pt] (comp) at (0.3,-4.2)
  {greedy completion: zeroing steps\\ until the residual is zero (8 steps)};
\draw[ed, dashed] (h) -- (comp);
\end{tikzpicture}
\caption{The first two levels of the beam search with width $2$, so two nodes are kept per level. Each node is a
residual, the part of the matrix that is not yet decomposed after the steps on its path. The number in a node is its
score, the steps so far plus the steps that the Birkhoff algorithm with the zeroing rule still needs to decompose the
residual. Nodes with a thick border have the lowest scores and are kept, and nodes with a dashed border are dropped.
Nodes marked $\times$ are pruned, because the steps so far plus the largest number of nonzero entries in a row or column
of the residual are at least the length of the best decomposition known. Both kept nodes of depth~1 lead to the same
residual with score $10$, which is kept only once. Its greedy completion has $8$ steps, which together with the $2$
steps on its path give a decomposition of length $10$.}\label{fig:beam}
\end{figure}

\subsubsection*{Improvement}
Further threads try to improve the best decomposition. The \emph{SAT re-optimization of the
tail} takes the \emph{tail} of the best decomposition, that is, the longest list of its lightest terms whose sum has no
entry above a cap, and asks the SAT model of Section~\ref{sec:certify} for a decomposition of this sum with fewer terms.
The caps $15$, $31$ and $63$ are tried in this order. With them, every entry of the sum fits into $4$, $5$ or $6$ bits,
which keeps the model small. A large neighborhood search~\cite{shaw1998,pisinger2010} removes a group of terms, namely a
random term with a small weight together with the terms that share the most positions with it, and decomposes the sum of
these terms again with the \emph{randomized zeroing rule}. This rule adds to the assignment cost of every position a
random number below half the value of one zeroed position and thus chooses at random among nearly equal permutations.
Finally, \emph{randomized restarts} run the Birkhoff algorithm with a randomized rule on the whole matrix $D$: on
matrices treated as dense with $n\ge12$, they use the logarithmic tie-break rule with random changes of the assignment
costs by less than $0.3$ binary digits, and otherwise the randomized zeroing rule.

\subsection{Parallel run}\label{sec:run}

Table~\ref{tab:threads} shows how a run uses its six threads. The threads of the large neighborhood search also start a
randomized restart in every fourth round. For $n\le7$, the beam search over all permutations finishes its first passes
within seconds, so the cases of the SAT certificates take over one or two of its threads. For $n\le4$, one thread calls
the SAT model for $k=L,L+1,\dots$ and keeps the first decomposition it finds. On matrices treated as dense, this is a
seventh thread.

\begin{table}[t]\centering\small
\caption{Threads of one run. A matrix is treated as sparse if at least $45\,\%$ of its nonzero entries have a value that
occurs more than once, and as dense otherwise. Certify forms the bound chain, and all other threads form the solution
chain. For small sizes, the cases of the SAT certificates take over beam search threads, and one thread searches for a
decomposition with the SAT model.}\label{tab:threads}
\begin{tabular}{@{}l>{\raggedright\arraybackslash}p{3.9cm}>{\raggedright\arraybackslash}p{5.6cm}@{}}\toprule
thread & treated as sparse & treated as dense \\ \midrule
1 & Certify & Certify, then randomized restarts ($n\ge24$) or large neighborhood search ($10\le n<24$) \\
2 & fixing by unique subset sums, then enumeration of weight sets & beam search \\
3 & largest weight first & beam search, cases of Certify for $n\le5$ \\
4 & large neighborhood search & beam search, cases of Certify for $n\le7$ \\
5 & large neighborhood search & SAT re-optimization of the tail \\
6 & large neighborhood search, SAT model for $n\le4$ & large neighborhood search \\
7 & & SAT model for $n\le4$ \\
\bottomrule\end{tabular}
\end{table}

\section{Implementation}\label{sec:implementation}

RCS is implemented in Rust. The code was written with the assistance of AI tools (see the statement on the use of
generative AI), and every reported decomposition is verified with the independent checker of the library
(Section~\ref{sec:setup}). This section describes the data structures and encodings on which the parts of
Section~\ref{sec:method} rely. All computations on the matrix use exact integer arithmetic. The matrix $D$ is stored row
by row as 64-bit integers, and the instance reader reads only the matrix, its size and its scale, never the generating
permutations and weights that the instance files also contain.

\subsubsection*{Matching}
Each step of the Birkhoff algorithm needs the bottleneck value $\theta$ and an assignment problem
on the \emph{threshold set} $\{e:R_e\ge\theta\}$, the set of positions with entry at least $\theta$. We keep the
positions with positive entries as keys sorted by value. After a step, only $n$ positions change and are merged into the
sorted list, so the threshold set of every threshold is a suffix of this list. Since $\theta$ never increases during a
run of the Birkhoff algorithm, the search for $\theta$ starts at the previous value and moves down in doubling steps.
Every test of a threshold is a perfect matching problem on its suffix, which we solve by the algorithm of Hopcroft and
Karp~\cite{hopcroft1973} on a compressed adjacency list, starting from the previous matching. The assignment problem is
solved by successive shortest paths with a binary heap on the threshold set only. Among columns at the same distance, we
take the one with the smallest index, as the Hungarian method~\cite{kuhn1955} does. With this tie-break, the Birkhoff
algorithm with the zeroing rule gives the same decompositions as with the Hungarian method on all test instances and is
$2$ to $4$ times faster for $n=12$ to $100$. All work arrays are allocated once and reused.

\subsubsection*{Packing bound}
The zero-sum subsets with three to $T$ elements, for $T=4,5,6$, are found by merging sorted
tables of subset sums of the two lines, and the packing linear program is solved by a dense simplex method with Bland's
rule. Its dual values are rounded up to multiples of $1/(1000(T+1))$, and every block constraint is checked in integer
arithmetic and repaired if necessary before the floor is taken.

\subsubsection*{SAT encodings}
The weights are encoded as binary numbers, and the sums use adder trees whose partial sums
are bounded by the target entry, so every adder has only $\lceil\log_2(D_e+1)\rceil$ bits and its overflow is fixed to
zero. A sequential counter~\cite{sinz2005} encodes that a term covers exactly one position per line, and a comparator
with a constant encodes the bound $c_l\le D_e$ for a term that covers position $e$. These are the usual translations of
pseudo-Boolean constraints into SAT~\cite{een2006}. A deadline and a stop flag end the solver when the run ends or, for
the cases of Section~\ref{sec:certify}, when one case is satisfiable. The bounds $U_e$ are computed as
$\min(D_e,\theta_e)$, where $\theta_e$ is the bottleneck value of $D$ without the row and the column of $e$. The SAT
re-optimization of the tail skips a cap if the number of tail terms times the number of positions of their sum exceeds
$50\,000$, which keeps the model below about one gigabyte.

\subsubsection*{Beam search}
For $n\le7$, the initial width is $200\,000$ divided by the number of permutations in the
nonzero pattern of $D$, and for $8\le n\le12$, $13\le n\le16$, $17\le n\le32$ and $n>32$, it is $30$, $10$, $3$ and $1$.
The successors of a level are never stored all at once. A first parallel pass computes a 64-bit hash value and the
degree of the residual for every pair of a node and a zeroing step, and a parallel sort merges successors with equal
hash values. A second pass rebuilds the remaining residuals and computes the lengths of their greedy completions. A
completion is stored as a decomposition only if it improves the best decomposition, and only the $W$ best successors are
kept. If there are at most $120$ candidate permutations, as for all $n\le5$, the greedy completions scan this list
instead of solving matching problems. For the successor that belongs to a position $e$ whose value occurs at no other
position, every perfect matching through $e$ inside $\{R\ge R_e\}$ zeroes exactly one position, so it is found by
rerouting the bottleneck matching of the node through $e$ along one augmenting path. For the other positions, we solve
an assignment problem without the row and the column of $e$.

\subsubsection*{Threads and memory}
Every improvement is written to disk at once, together with its time, via a temporary
file and a rename, so an interrupted run loses nothing, and the time series of the best decomposition is available for
every run. The beam search uses its own thread pool, and the memory allocator is mimalloc~\cite{mimalloc}.

\section{Computational experiments}\label{sec:results}

\subsection{Experimental setup}\label{sec:setup}

\subsubsection*{Test instances}
We use the \NInst{} test instances of the minimum Birkhoff decomposition problem from the
Quantum Optimization Benchmarking Library~\cite{koch2025}, \NSparse{} sparse and \NDense{} dense matrices: ten instances
for each class and each size $n=3,\dots,16$ and $n=24$, $32$, $48$, $64$, $100$, except that there are only five dense
instances for $n=100$. The sparse instances have $n$ generating permutations and the dense instances $n^2$. The entries
have $3$ to $7$ decimal digits, so $s$ ranges from $10^3$ to $10^7$ and grows with $n$, and the generating weights have
the same precision. Appendix~\ref{app:instances} describes the instances in more detail.

\subsubsection*{Hardware and software}
All computations were run on an Apple MacBook Pro with an M4 Pro processor (12 cores)
and 24\,GB of memory under macOS 27.0.1. RCS was compiled with Rust 1.89.0 and uses the SAT solver CaDiCaL
1.9.5~\cite{cadical2020} through its Rust interface and the library rayon 1.12.0 for parallel loops.

\subsubsection*{Protocol}
Every run used six threads,\footnote{Seven for matrices with $n\le4$ that RCS treats as dense
(Table~\ref{tab:threads}).} and two instances were run at the same time. A run consists of two rounds. In the first
round, each instance has a time limit of 60\,s, and the instances without a proof of optimality get a second round with
a time limit of 600\,s, which starts from scratch. The result of a run is the shorter decomposition of its two rounds.
RCS is randomized: several threads draw random numbers, and since the threads share the best decomposition, the result
also depends on their timing. We made \NRuns{} independent runs with the seeds $0,\dots,4$ of the random number
generators. For each instance, we report the shortest decomposition over the runs (on a tie, the run that proved
optimality itself, then the faster run) and the largest lower bound over all rounds. All results come from these runs of
one version of the algorithm, and every lower bound and every proof of optimality comes from the runs themselves. All
decompositions were also verified with the independent checker that comes with the instances.

\subsubsection*{Measures}
For each instance, $k$ denotes the length of the best decomposition over the runs and $L$ the
lower bound. The time to the best decomposition of a run is the time from its start until it first found a decomposition
of its final length, where the second round counts from the end of the first. The running time of a run is the time of
both rounds. For each instance, both times are means over the \NRuns{} runs. Times over groups of instances are
aggregated by the shifted geometric mean with a shift of 1\,s. Appendix~\ref{app:results} gives the results per family
of instances, and Appendix~\ref{app:all} gives them for every instance.

\subsection{Solution quality and optimality}\label{sec:quality}

Table~\ref{tab:groups} summarizes the results by class and size, and Figure~\ref{fig:bounds} shows $k$ and $L$ for every
instance. RCS proves optimality for \NProven{} of the \NInst{} instances, namely for \NProvenSparse{} of the \NSparse{}
sparse and \NProvenDense{} of the \NDense{} dense ones. For \NProvenReal{} of them, the proof comes from the degree or
the two-line bound, so the decomposition is optimal even among all decompositions with real weights, $k=k^*_{\R}(D)$.
For the other \NProvenSatOnly{} instances, the proof is a SAT certificate and holds only for precision $1/s$.

All \NRuns{} runs give the same length on \NRunsSame{} instances, namely on all \NSparse{} sparse instances, on all
\NProvenDense{} dense instances with a proof of optimality and on \NRunsSameDenseOpen{} further dense instances. On the
other \NRunsDiffer{} instances, which are all dense, the lengths of the runs differ by at most \MaxRunSpread{}, and on
\NRunsSingleBest{} of them only one run reaches the best length. For $n\ge24$, the standard deviation of the length over
the runs is \DenseLargeRunStd{} on average.

\begin{table}[t]\centering\small
\caption{Results by class and size. Sparse matrices are built from $n$ permutation matrices and dense ones from $n^2$.
The columns give the number of instances, the mean over the instances of the length of the best decomposition divided by
the size $n$ and the same mean for the lower bound, the number of instances whose decomposition is proven optimal, the
time to the best decomposition and the running time, both means over the runs. Times over instances are shifted
geometric means with a shift of 1\,s. A run stops early when its decomposition is proven optimal.}\label{tab:groups}
\GroupTable
\end{table}

\begin{figure}[t]\centering
\pgfplotsset{/pgfplots/bounds/.style={width=\linewidth, height=6cm, xmin=0, xmax=\FigBoundsXmax,
  xtick/.expanded={\FigBoundsTicks}, xticklabels/.expanded={\FigBoundsTickLabels},
  extra x ticks/.expanded={\FigBoundsTicksLow}, extra x tick labels/.expanded={\FigBoundsTickLabelsLow},
  extra x tick style={xticklabel style={yshift=-1.6ex}}, x tick label style={font=\scriptsize},
  every x tick/.style={draw=none}, xlabel={size $n$}, ymajorgrids}}
\begin{tikzpicture}
\begin{axis}[bounds, ymode=log, ymin=1, ymax=1000, ytick={1,2,5,10,20,50,100,200,500},
  yticklabels={1,2,5,10,20,50,100,200,500}, ylabel={length}, legend style={at={(0.02,0.98)}, anchor=north west}]
\addplot[forget plot, draw=none, mark=none, error bars/.cd, y dir=minus, y explicit, error mark=none,
  error bar style={kcolor!35, line width=0.3pt}] table[x=x, y=k, y error expr=\thisrow{k}-\thisrow{l}]
  {fig_bounds_open.csv};
\addplot[only marks, mark=triangle*, mark size=0.75pt, lbcolor, legend image post style={mark size=1.6pt}]
  table[x=x, y=l] {fig_bounds_open.csv};
\addlegendentry{lower bound, unproven instance}
\addplot[only marks, mark=*, mark size=0.4pt, kcolor, legend image post style={mark size=1.2pt}]
  table[x=x, y=k] {fig_bounds_open.csv};
\addlegendentry{length, unproven instance}
\addplot[only marks, mark=diamond*, mark size=0.85pt, optcolor, legend image post style={mark size=1.8pt}]
  table[x=x, y=k] {fig_bounds_proven.csv};
\addlegendentry{proven optimal}
\draw[black, line width=0.4pt] (axis cs:\FigBoundsSeparator,1) -- (axis cs:\FigBoundsSeparator,1000);
\node[font=\small, anchor=south] at (axis cs:\FigBoundsMidSparse,1.12) {sparse};
\node[font=\small, anchor=south] at (axis cs:\FigBoundsMidDense,1.12) {dense};
\end{axis}
\end{tikzpicture}
\caption{Length of the best decomposition and lower bound for every test instance, on a logarithmic scale. Diamonds mark instances for which the two are
equal, so that the decomposition is proven optimal. For every other instance, a triangle marks the lower bound, a circle
marks the length, and a vertical line connects them. The black vertical line separates the sparse instances (left,
built from $n$ permutation matrices) from the dense instances (right, built from $n^2$ permutation matrices). Within
each class, the instances are grouped by size, and the number below each group is its size $n$.}\label{fig:bounds}
\end{figure}

RCS finds a decomposition of length $n$ for every sparse instance with $n\ge6$ and a shorter one for some sparse
instances with smaller $n$. The median time to the best decomposition is \SparseMedianTts\,s, and the longest in any run is
\SparseMaxTts\,s. Only \NOpenSparse{} sparse instances remain without a proof of optimality, each with $k=L+1$.

The dense decompositions are much longer. Their mean length divided by $n$ is \DenseRatioFive{} for $n=5$ and between
\DenseRatioMidMin{} and \DenseRatioMidMax{} for $n\ge12$, so RCS needs only a small fraction of the $n^2$ generating
permutations. This ratio jumps where the scale $s$ grows by a factor of ten, from $n=9$ to $n=10$, from $n=24$ to $n=32$
and from $n=64$ to $n=100$, and it decreases between these sizes (Figure~\ref{fig:bounds} and
Appendix~\ref{app:instances}). The number of decimal digits of the data therefore has a strong effect on the length.

\subsection{Lower bounds}\label{sec:bounds}

Table~\ref{tab:stages} shows which lower bound proves optimality. For most of the proven instances, it is the two-line
bound. In sparse matrices, most entries are single weights, so two lines share most of their values. After equal values
have been paired, few unmatched values remain, and the exact block number or the packing bound shows that they cannot
form many blocks. The packing bound is needed for the large instances, because for $n\ge32$ the promising pairs of lines
have $34$ to $80$ unmatched values, too many for the dynamic program over subsets.

In dense matrices, two lines share few values, and the two-line bound stays between $1.5\,n$ and $1.7\,n$ for $n\ge24$.
The SAT certificates prove optimality for all dense instances with $n=4$ and for one instance with $n=5$. They raise the
lower bound above the two-line bound for all dense instances with $5\le n\le7$, for seven with $n=8$ and for three with
$n=9$, and the mean lower bound decreases from $2.5\,n$ for $n=5$ to $1.8\,n$ for $n=9$. These bounds hold only for
precision $1/s$. For $n\ge24$, the mean lower bound is $\DenseLbRatioLarge\,n$, while the mean length per size lies
between $\DenseRatioBigMin\,n$ and $\DenseRatioBigMax\,n$.

\begin{table}[t]\centering\small
\caption{Number of instances by the lower bound that proves optimality. The degree is the largest number of nonzero
entries in a row or a column. The two-line bound uses the entries of two rows or two columns. A SAT certificate shows
with a satisfiability solver that no decomposition with fewer terms and with weights that are multiples of $1/s$ exists.
The column ``valid for'' says whether the proof holds for all real weights or only for weights that are multiples of
$1/s$, like the data.}\label{tab:stages}
\StageTable
\end{table}

\subsection{Running times}\label{sec:runtime}

Figure~\ref{fig:tts} shows the time to the best decomposition for every instance. On sparse matrices, every run finds the
best decomposition after at most \SparseMaxTts\,s for all sizes up to $n=100$, and \SparseProvenFast{} of the
\NProvenSparse{} proofs of optimality take less than one second in every run. On dense matrices, the first decomposition is available
after less than $0.2$\,s for all sizes, and most of the improvement happens in the first seconds
(Figure~\ref{fig:progress}). In the run that found the best decomposition, for $6\le n\le16$, the median time to reach
length $k+2$ is \ReachTwoMid\,s, and the median time to reach $k+1$ is \ReachOneMid\,s. At $n=24$, these times are
\ReachTwoTwentyFour\,s and \ReachOneTwentyFour\,s, and at $n=32$ they are \ReachTwoThirtyTwo\,s and
\ReachOneThirtyTwo\,s. The last one or two terms take most of the time, often tens to hundreds of
seconds. From $n=24$ on, the beam search is slow, because one greedy completion takes $O(kn^3)$ time, the first level
has thousands of successors, and the improvements of a level are offered only when the level is complete. For these
sizes, the large neighborhood search and the randomized restarts find the best decompositions
(Section~\ref{sec:components}).

\begin{figure}[t]\centering
\begin{tikzpicture}
\begin{axis}[width=0.9\linewidth, height=5.6cm, xmode=log, ymode=log, log basis x=10, xmin=2.6, xmax=120,
  ymin=5e-5, ymax=3000, xtick={3,5,10,16,32,64,100}, xticklabels={3,5,10,16,32,64,100}, xlabel={size $n$},
  ylabel={time to best decomposition [s]}, grid=major, legend columns=2, legend style={draw=none, at={(0.5,-0.22)},
  anchor=north, /tikz/every even column/.append style={column sep=8pt}}]
\addplot[only marks, mark=o, mark size=1.3pt, sparsecolor] table[x=n, y=tts] {fig_tts_sparse.csv};
\addlegendentry{sparse}
\addplot[only marks, mark=square, mark size=1.2pt, densecolor] table[x=n, y=tts] {fig_tts_dense.csv};
\addlegendentry{dense}
\addplot[thick, sparsecolor] table[x=n, y=tts] {fig_tts_median_sparse.csv};
\addlegendentry{median, sparse}
\addplot[thick, densecolor, dashed] table[x=n, y=tts] {fig_tts_median_dense.csv};
\addlegendentry{median, dense}
\end{axis}
\end{tikzpicture}
\caption{Time to the best decomposition against the size $n$ for every instance, as the mean over the runs, both on
logarithmic scales. Circles
mark sparse instances and squares dense instances. The solid line connects the medians per size for the sparse
instances, and the dashed line those for the dense instances. Times below $10^{-4}$\,s are shown at
$10^{-4}$\,s.}\label{fig:tts}
\end{figure}

\begin{figure}[t]\centering
\begin{tikzpicture}
\begin{axis}[width=0.9\linewidth, height=5.6cm, xmode=log, log basis x=10, xmin=1e-3, xmax=700, ymin=0, ymax=17.5,
  xlabel={time [s]}, ylabel={excess over the final length [\%]}, const plot, grid=major,
  legend style={at={(0.98,0.98)}, anchor=north east}]
\addplot[thick, tabblue] table[x=t, y=n8] {fig_progress.csv};
\addlegendentry{$n=8$}
\addplot[thick, taborange, densely dashed] table[x=t, y=n16] {fig_progress.csv};
\addlegendentry{$n=16$}
\addplot[thick, tabgreen, dotted] table[x=t, y=n32] {fig_progress.csv};
\addlegendentry{$n=32$}
\addplot[thick, tabred, dashdotted] table[x=t, y=n64] {fig_progress.csv};
\addlegendentry{$n=64$}
\addplot[thick, tabpurple, loosely dashed] table[x=t, y=n100] {fig_progress.csv};
\addlegendentry{$n=100$}
\end{axis}
\end{tikzpicture}
\caption{Progress on dense instances of five sizes $n$, in the run that found the best decomposition. Each curve shows
by how many percent the best decomposition found so far is longer than the final decomposition, as the median over the ten instances of this size
(five for $n=100$). Each curve starts when the first decomposition is available.}\label{fig:progress}
\end{figure}

\subsection{Contribution of the components}\label{sec:components}

Table~\ref{tab:methods} shows which component found the best decomposition. On many small sparse instances, the Birkhoff
algorithm with the zeroing rule is already optimal, and on almost all others Recover gives the result, through fixing by
unique subset sums or the enumeration of weight sets. One dense instance with $n=3$ is treated as
sparse (Section~\ref{sec:problem}), so Recover runs on it, and the enumeration of weight sets finds its best
decomposition. On dense instances with $n\le16$, the beam search finds most of the best decompositions, whereas on
larger ones it never gives the best decomposition. There, the large neighborhood search, the randomized restarts and the
SAT re-optimization of the tail make the last improvements.

\begin{table}[t]\centering\small
\caption{Number of instances for which each component of RCS found the best decomposition, in the run that is reported.
The first component gives the first decomposition of every run. The others either build their own decompositions or improve the best
one.}\label{tab:methods}
\MethodTable
\end{table}

\subsection{Comparison with other methods}\label{sec:comparison}

Other groups have submitted results for these test instances to the library~\cite{qoblibsub}. We checked every submitted
decomposition in integer arithmetic and counted it only if it reproduces $D$ exactly. Table~\ref{tab:comparison}
compares RCS with the five submissions that contain exact decompositions: Birkhoff+~\cite{valls2021}, a mixed-integer
program solved with Gurobi (MIP), an extended fully corrective Frank--Wolfe method (E-FCFW) whose candidate permutations
are sampled by the quantum approximate optimization algorithm (QAOA), a quantum black-box optimization method, and a
partitioned variant of QAOA with E-FCFW. Since these submissions ran on other computers and with other time limits,
their running times can be compared only roughly.

\begin{table}[t]\centering\small
\caption{Comparison with the exact decompositions of other methods. For each method, the table gives the range of sizes
$n$ and the number of instances for which the method gives an exact decomposition. On these instances, it counts how
often the decomposition of RCS is shorter than, as long as, or longer than that of the method. The ratio is the mean of
the length of the method divided by the length of RCS. The last column gives the number of these instances proven
optimal by the method and by RCS.}\label{tab:comparison}
\CompTable
\end{table}

No submitted exact decomposition is shorter than the decomposition of RCS. Birkhoff+ covers \CompBplusN{} instances with
$n$ up to $100$ and is fast, with a median reported running time of \CompBplusTime\,s per instance, but its
decompositions are on average \CompBplusRatio{} times as long. On the same instances, the median time at which RCS finds
its best decomposition is \CompBplusOurTime\,s. The MIP covers the \CompMipN{} instances with $n\le6$ and proves
optimality for \CompMipProven{} of them, whereas RCS proves optimality for \CompMipOurProven{}. The difference consists
of five dense instances with $n=5$, for which RCS finds decompositions of the same length, but its lower bound stays one
below. On the ten dense instances with $n=6$, the MIP stops at its time limit of two hours, and the decompositions of
RCS are shorter by one to three terms. E-FCFW gives exact decompositions for \CompEfcfwN{} of its $131$ instances, all
with $n\le8$, and approximate ones for the others, with a largest entry error between $5.5\cdot10^{-5}$ and
$2.7\cdot10^{-2}$. On the \CompEfcfwN{} instances with exact decompositions, RCS finds decompositions of the same length
for \CompEfcfwEqual{} and shorter ones for \CompEfcfwShorter, and it proves all of them optimal. The quantum black-box
method gives exact decompositions for \CompQbN{} of its $40$ instances with $n=4,5$. RCS is shorter on \CompQbShorter{}
of them and as long on the others. The partitioned variant of QAOA gives one exact decomposition, of length $66$ for an
instance with $n=16$, for which RCS finds $16$ terms.

Of the other submissions, one contains only approximate decompositions, and three contain no solution files, so their
reported lengths cannot be checked. None of the four appears in the table. For Blended Frank--Wolfe and a first version
of Birkhoff+, the reported lengths are never shorter than those of RCS. For the integer program of Koch et
al.~\cite{koch2025}, solved with CPLEX ($n\le8$), the reported lengths of $20$ dense instances lie below proven lower
bounds and therefore cannot be lengths of exact decompositions with weights of precision $1/s$, which an integer program
produces. For $n=7$, the reported length is $7$, but a decomposition of length $7$ would cover each position of a row
with seven nonzero entries by exactly one term, so all such rows would consist of the same seven weights, which is not
the case in any of the ten instances. The two-line bound gives $13$ for these instances. For $n=5$, the reported lengths
of $8$ to $10$ lie below the lower bounds of $12$ to $14$ that both the MIP and our SAT certificates prove for precision
$1/s$.

\section{Conclusion and future work}\label{sec:conclusions}

We presented RCS, a method for exact minimum Birkhoff decompositions of doubly stochastic matrices with finite
precision, and tested it on the \NInst{} test instances of the Quantum Optimization Benchmarking Library.

For sparse matrices, the problem is solved in practice: the degree and the two-line bound, which hold for real weights,
prove optimality for \NProvenSparseReal{} of the \NSparse{} sparse instances, and the \NOpenSparse{} open ones have
$k=L+1$. Dense matrices are much harder: the beam search over zeroing steps and, for large $n$, the large neighborhood
search and the randomized restarts give short decompositions, but optimality is proven only up to $n=5$, and for $n\ge4$
only for weights with precision $1/s$. Compared with the exact decompositions known before, RCS is shorter on \NBetter{}
instances, equal on \NEqual{} and never longer, and it gives the first exact decompositions for \NNew{} instances.

Not every idea worked: sampling vertices of the linear relaxation over permutation matrices gave $16$ to $17$ terms on
dense instances with $n=5$, where decompositions with $13$ to $14$ terms exist, and plain linear programming bounds were
always weaker than the two-line bound.

The main open question is the gap for large dense matrices, where the decompositions are about \DenseGapFactor{} times
as long as the lower bound (Section~\ref{sec:bounds}), and we do not know whether the bounds or the decompositions are
farther from the optimum. The two-line bound alone cannot close this gap, because two lines of a dense matrix allow many
blocks, so stronger bounds will need arguments that involve more lines. It also remains open whether real weights can
give shorter decompositions for matrices of this kind, since our SAT certificates hold only for the fixed precision
$1/s$. Finally, the running time for large dense matrices is a practical problem, because the beam search spends most of
it on greedy completions.

\section*{Code and data availability}

The source code of RCS (Rust), the scripts that run the experiments and the scripts that
produce all tables and figures of this paper are available at \url{https://github.com/tg90000/rcs-birkhoff}.

The test instances are available from the Quantum Optimization Benchmarking
Library~\cite{koch2025} at \url{https://github.com/ZIB-AOPT/QOBLIB}. All decompositions, lower bounds and run logs
generated in this study are available at \url{https://github.com/tg90000/rcs-birkhoff}.

\section*{Use of generative AI}

AI assisted in coding and in formulating the manuscript. Claude Code (Anthropic) with
the model Claude Opus 5.5 assisted in writing the Rust code, the experiment scripts and the text, and Claude Fable 5.1
assisted in challenging ideas and in verifying them. The author directed the work, checked the code, the proofs and the
results, and is responsible for the content.

\section*{Acknowledgments}

We thank Dr.~Corey O'Meara for conversations on the problem and on hybrid solutions.

\bibliographystyle{plainurl}
\bibliography{references}

\clearpage
\appendix

\section{Proofs}\label{app:proofs}

The proofs follow the order in which Sections~\ref{sec:problem} and~\ref{sec:method} use the results.

\begin{proof}[Proof of Lemma~\ref{lem:partition}]
Comparing the entries at position $(i,j)$ on both sides of $D=\sum_lc_lP_{\pi_l}$ gives $D_{ij}=\sum_{l\in I_{ij}}c_l$.
Since all $c_l$ are positive, $I_{ij}$ is empty if and only if $D_{ij}=0$. For a fixed row $i$, every permutation
$\pi_l$ maps $i$ to exactly one column, so every $l$ lies in exactly one of the sets $I_{ij}$, $j=1,\dots,n$. Dropping
the empty ones leaves the sets with $D_{ij}>0$, which therefore partition $\{1,\dots,k\}$. For a fixed column $j$, every
$\pi_l$ maps exactly one row to $j$, and the same argument applies to the sets $I_{ij}$, $i=1,\dots,n$. Summing $D_{ij}$
over the columns $j$ of any row gives $\sum_lc_l=s$, because the sets $I_{ij}$ of the row partition $\{1,\dots,k\}$.
Since the partition of a line has one nonempty set for each nonzero entry of the line, $k\ge\deg(D)$ for every
decomposition with real weights, so $k^*_{\R}(D)\ge\deg(D)$, and $k^*(D)\ge k^*_{\R}(D)$ because every decomposition
with precision $1/s$ is a decomposition with real weights.
\end{proof}

\begin{theorem}[soundness of fixing by unique subset sums]\label{thm:sweep}
Let $(c_l,\pi_l)_{l=1}^k$ be a decomposition of $D$ whose weights, indexed so that $c_1\le\dots\le c_k$, are the weights
given to fixing by unique subset sums, and let $I^*_{ij}=\{l:\pi_l(i)=j\}$. Then every position $(i,j)$ that the passes
of the procedure fix is fixed to $I^*_{ij}$. In particular, all decompositions with these weights agree at the fixed
positions.
\end{theorem}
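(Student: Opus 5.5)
The plan is an induction over the individual fixing steps of the passes, in the order in which the procedure performs them. Fix the decomposition $(c_l,\pi_l)_{l=1}^k$ together with its indexing $c_1\le\dots\le c_k$; this is the indexing used by the procedure, with equal weights counted as different indices. The induction carries the following invariant.
\begin{itemize}
\item[(a)] Every position fixed so far is fixed to $I^*_{ij}$.
\item[(b)] For every row $i$, the set $F_i$ equals $\{1,\dots,k\}$ minus the union of the sets $I^*_{ij'}$ over the already fixed positions $(i,j')$ of that row. The same holds for every column $j$ and $G_j$.
\end{itemize}
Initially nothing is fixed and $F_i=G_j=\{1,\dots,k\}$, so the invariant holds trivially.

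For the inductive step, consider a visit of an open position $(i,j)$ that results in fixing. The key claim is that $I^*_{ij}$ is itself one of the subsets of $F_i\cap G_j$ with sum $D_{ij}$.
\begin{itemize}
\item By Lemma~\ref{lem:partition}, $\sum_{l\in I^*_{ij}}c_l=D_{ij}$.
\item By the same lemma, the sets $I^*_{ij'}$, $j'=1,\dots,n$, with $D_{ij'}>0$ are pairwise disjoint, because they partition $\{1,\dots,k\}$. Hence $I^*_{ij}$ meets none of the sets $I^*_{ij'}$ of the fixed positions $(i,j')$ of row $i$. These positions are different from $(i,j)$, since $(i,j)$ is still open.
\item By invariant (b), this gives $I^*_{ij}\subseteq F_i$. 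The same argument for column $j$ gives $I^*_{ij}\subseteq G_j$.
\end{itemize}
The procedure fixes $(i,j)$ only when it finds exactly one subset of $F_i\cap G_j$ with sum $D_{ij}$. By the claim, that subset must be $I^*_{ij}$, which gives (a).

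It remains to check that (b) is preserved. Removing $I^*_{ij}$ from $F_i$ and $G_j$ is exactly the update that (b) prescribes, and no other sets $F_{i'}$ or $G_{j'}$ change. Passes are repeated, but each step is a visit of this form, so the induction covers all passes. The case in which the procedure finds no subset and declares that the weights do not fit $D$ cannot occur, because $I^*_{ij}$ is always available. The budget cutoff only leaves a position open, so it never affects correctness.

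The last sentence of the statement follows at once. Any two decompositions with these weights, each indexed in sorted order, have their sets $I^*_{ij}$ equal to the same procedure output at every fixed position. The one subtle point, and where I expect care to be needed, is the treatment of equal weights as distinct indices. The claim has to be applied to the given indexing of the decomposition, not to its weight values. A relabeling of equal weights changes $I^*_{ij}$, but it does not break the argument. If two labelings gave different sets $I^*_{ij}$ of equal sum inside $F_i\cap G_j$, then the procedure would see two subsets and would leave $(i,j)$ open. So fixing can only happen when all valid labelings agree at $(i,j)$, and the statement is consistent for every sorted indexing. I would state this explicitly in the proof, so that the uniqueness test visibly accounts for the freedom among equal weights.
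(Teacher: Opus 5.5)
Your proposal is correct and follows the paper's own proof. It uses the same induction over the fixing steps, where your invariant (b) spells out what the paper states inline: the indices removed from $F_i$ and $G_j$ are exactly the sets $I^*_{ij'}$ of the already fixed positions. This gives $I^*_{ij}\subseteq F_i\cap G_j$ with sum $D_{ij}$ by Lemma~\ref{lem:partition}, and uniqueness then forces the fixed subset to be $I^*_{ij}$.
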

\begin{proof}
We proceed by induction over the fixing steps. Initially, $F_i\cap G_j=\{1,\dots,k\}$ for all $i$ and $j$, so every set
$I^*_{ij}$ is contained in $F_i\cap G_j$. Assume that all positions fixed so far are fixed to their sets $I^*$. Consider
an open position $(i,j)$. The indices removed from $F_i$ are exactly those of the sets $I^*_{ij'}$ of the fixed
positions $(i,j')$ of row $i$. These have $j'\ne j$, so they are disjoint from $I^*_{ij}$ by Lemma~\ref{lem:partition},
and the same holds for $G_j$. Hence $I^*_{ij}\subseteq F_i\cap G_j$, and $I^*_{ij}$ has sum $D_{ij}$ by
Lemma~\ref{lem:partition}. A position is fixed only if the enumeration runs to completion within its budget and finds
exactly one subset of $F_i\cap G_j$ with sum $D_{ij}$. Since $I^*_{ij}$ is such a subset, the fixed subset is
$I^*_{ij}$.
\end{proof}

\begin{proposition}[two-line bound]\label{prop:twoline}
Let $a$ and $b$ be the multisets of the nonzero entries of two lines, with $p$ and $q$ elements. A set of these elements
is \emph{balanced} if its elements from $a$ and its elements from $b$ have equal sums. A \emph{block partition} of
$(a,b)$ splits the $p+q$ elements into nonempty balanced sets, the \emph{blocks}. Let the block number $\gamma^*(a,b)$
be the largest number of blocks of a block partition. Then every decomposition of $D$, with integer or real weights, has
at least $p+q-\gamma^*(a,b)$ terms, that is, $k^*(D)\ge k^*_{\R}(D)\ge p+q-\gamma^*(a,b)$.
\end{proposition}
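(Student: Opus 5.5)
Fix a decomposition $(c_l,\pi_l)_{l=1}^k$ of $D$ with real weights $c_l>0$, and fix the two lines in question. I will build a bipartite-style multigraph $H$ whose vertices are the $p+q$ positions with nonzero entries in the two lines. By Lemma~\ref{lem:partition}, each term $l$ covers exactly one nonzero position $u_l$ of the first line and exactly one nonzero position $v_l$ of the second line. Indeed, $l$ lies in exactly one set $I_{ij}$ of each line, and that set is nonempty, so its entry is positive. I add one edge $\{u_l,v_l\}$ for each term $l$. If the two lines are a row and a column that intersect at a nonzero position, the edge may be a loop there; this needs a short remark but causes no harm. So $H$ has exactly $k$ edges and $p+q$ vertices.

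Next I show that every connected component $C$ of $H$ is balanced. Summing the entries of the first-line positions in $C$ gives, by Lemma~\ref{lem:partition}, $\sum c_l$ over the terms $l$ with $u_l\in C$. Since edges stay inside components, these are exactly the edges of $C$. The same computation for the second-line positions gives the same sum over the edges of $C$. Hence the two sums are equal.

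Every vertex of $H$ has an incident edge, because its entry is positive and so its index set is nonempty. Therefore every component is nonempty. It follows that the components form a block partition of $(a,b)$, and the number $\kappa$ of components satisfies $\kappa\le\gamma^*(a,b)$.

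Finally, a graph with $p+q$ vertices and $\kappa$ components has at least $p+q-\kappa$ edges, since each component with $r$ vertices needs a spanning tree with $r-1$ edges. This gives $k\ge p+q-\kappa\ge p+q-\gamma^*(a,b)$. The bound holds for every real decomposition, so it holds for $k^*_{\R}(D)$, and $k^*(D)\ge k^*_{\R}(D)$ by Lemma~\ref{lem:partition}.

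The only delicate point is the case of a row and a column sharing a position. There the position appears in both $a$ and $b$, and I treat it as two distinct elements. Every term through it gives a loop-like edge joining its two copies. With this convention, the counting goes through unchanged.
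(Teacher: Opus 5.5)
Your proof is correct and follows essentially the paper's argument. The paper works with the bipartite support graph of the matrix $H_{ef}=\sum_{l:\,\pi_l\text{ covers }e\text{ and }f}c_l$, which has at most $k$ edges, while you keep one parallel edge per term and get exactly $k$; the balance of the components, the count $p+q-\kappa$, and the two-copy treatment of a shared row--column position are the same. With the two-copy convention, the edge of a term through the shared position joins two distinct vertices, so no loop ever occurs and you can drop your earlier remark about loops.
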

\begin{proof}
Take a decomposition with $k$ terms and real weights $c_l>0$. Let $H$ be the matrix whose rows are indexed by the $p$
positions of the first line with nonzero entries and whose columns are indexed by the $q$ positions of the second line
with nonzero entries, with $H_{ef}=\sum_{l:\,\pi_l\text{ covers }e\text{ and }f}c_l$. Every term contributes to exactly
one entry of $H$, so $H$ has at most $k$ nonzero entries. The nonzero entries of $H$ form a bipartite graph on the $p+q$
positions without isolated vertices, because all these positions have nonzero entries. If this graph has $\gamma$
connected components, it has at least $p+q-\gamma$ edges. The row and column sums of $H$ are the entries of $a$ and $b$,
so the components form a block partition, and $\gamma\le\gamma^*(a,b)$. If $a$ and $b$ are a row and a column, their
common position appears on both sides, and the argument is the same.
\end{proof}

\begin{lemma}[equal values first]\label{lem:equal}
If a value $x$ occurs in both $a$ and $b$, with copies $x_a\in a$ and $x_b\in b$, then some block partition with
$\gamma^*(a,b)$ blocks contains the block $\{x_a,x_b\}$. Consequently, $\gamma^*(a,b)=M_2+\gamma^*(a',b')$, where $M_2$
is the number of pairs in a maximum matching of equal values, and $a'$ and $b'$ are the unmatched values of $a$ and $b$.
\end{lemma}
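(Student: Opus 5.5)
The plan is an exchange argument on an optimal block partition, followed by induction on the number of matched pairs. Fix a block partition $\mathcal B$ with $\gamma^*(a,b)$ blocks, and let $B_a$ and $B_b$ be the blocks containing $x_a$ and $x_b$.

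If $B_a=B_b$, the set $\{x_a,x_b\}$ is balanced. The rest $B_a\setminus\{x_a,x_b\}$ is also balanced, since we remove equal amounts $x$ from both sides. If the rest is nonempty, splitting $B_a$ into these two parts gives more than $\gamma^*$ blocks, a contradiction. So the rest is empty and $B_a=\{x_a,x_b\}$ already.

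If $B_a\ne B_b$, we replace the two blocks by $\{x_a,x_b\}$ and $C=(B_a\cup B_b)\setminus\{x_a,x_b\}$. The set $C$ is balanced: the $a$-part of $B_a\cup B_b$ and its $b$-part have equal sums, and we remove $x$ from each side. If $C$ is nonempty, the block count is unchanged, so the new partition is still optimal and contains $\{x_a,x_b\}$. If $C$ is empty, the count drops by one. But then $B_a=\{x_a\}$, which is impossible, because a nonempty balanced set containing only one element $x_a>0$ cannot balance. So $C$ is nonempty. Positivity of the entries is the key ingredient here: a balanced nonempty set must contain elements of both lines.

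For the consequence, we induct. If a value $x$ occurs in both lines, the first part gives an optimal partition containing $\{x_a,x_b\}$. Its remaining blocks form a block partition of $(a\setminus x_a,\,b\setminus x_b)$. Conversely, any block partition of the reduced pair extends by this block. Hence $\gamma^*(a,b)=1+\gamma^*(a\setminus x_a,b\setminus x_b)$.

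Repeating this while some value is common to both remaining multisets removes exactly $\sum_x\min(\mathrm{mult}_a(x),\mathrm{mult}_b(x))$ pairs. This number is $M_2$, because a maximum matching of equal values matches $\min(\mathrm{mult}_a(x),\mathrm{mult}_b(x))$ copies of each value $x$, whatever greedy order is used. What remains are exactly the unmatched multisets $a'$ and $b'$, which gives $\gamma^*(a,b)=M_2+\gamma^*(a',b')$. The process ends with $\gamma^*(\emptyset,\emptyset)=0$ when nothing remains.

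The main obstacle is the case analysis in the exchange step, in particular ruling out that $C$ is empty. The ingredient needed there is that all elements are positive.
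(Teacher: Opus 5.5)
Your proof is correct and follows the paper's exchange argument: the same two cases $B_a=B_b$ and $B_a\ne B_b$, with positivity ruling out $C=\emptyset$. Your induction for the consequence, including the two-sided argument for $\gamma^*(a,b)=1+\gamma^*(a\setminus x_a,b\setminus x_b)$ and the identification of $M_2$ with $\sum_x\min(\mathrm{mult}_a(x),\mathrm{mult}_b(x))$, spells out what the paper compresses into ``the argument can be repeated for every pair.''
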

\begin{proof}
Take a block partition with $\gamma^*(a,b)$ blocks. If $x_a$ and $x_b$ lie in the same block $B$, then
$B\setminus\{x_a,x_b\}$ is balanced. Since all values are positive, this set is either empty or has elements on both
sides, so splitting it off does not decrease the number of blocks. If $x_a\in B_1$ and $x_b\in B_2$ with $B_1\ne B_2$,
replace $B_1$ and $B_2$ by $\{x_a,x_b\}$ and $(B_1\cup B_2)\setminus\{x_a,x_b\}$. The second set is balanced, and it is
nonempty, because otherwise $B_1=\{x_a\}$, which is not balanced since $x>0$. The other blocks do not change, so the
argument can be repeated for every pair.
\end{proof}

\begin{proposition}[exact block number]\label{prop:blockcount}
Let $m\ge1$, and let $v_1,\dots,v_m$ be the signed unmatched values, with sign $+$ for the values of $a'$ and sign $-$
for the values of $b'$. Their sum is zero, because both lines sum to $s$ and every matched pair removes the same value
from both.
\begin{enumerate}
\item The block number $\gamma^*(a',b')$ is the largest number of nonempty zero-sum prefixes over all orderings of
  $v_1,\dots,v_m$. Consequently, the recursion $\mathrm{dp}(S)=\max_{i\in S}\mathrm{dp}(S\setminus\{i\})+[\sum_{j\in
  S}v_j=0]$ with $\mathrm{dp}(\emptyset)=0$ gives $\gamma^*(a',b')=\mathrm{dp}(\{1,\dots,m\})$, where the bracket
  $[\,\cdot\,]$ is $1$ if the condition in it holds and $0$ otherwise.
\item Let $\mathcal Z$ be the family of nonempty zero-sum subsets of $\{1,\dots,m\}$. For a nonempty zero-sum set $S$,
  let $\min S$ be its element with the smallest index, and let $g(S)=\max\bigl(1,\max\{1+g(S\setminus Z):Z\in\mathcal
  Z,\ \min S\in Z\subsetneq S\}\bigr)$. Then $\gamma^*(a',b')=g(\{1,\dots,m\})$.
\end{enumerate}
\end{proposition}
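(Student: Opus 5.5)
The plan is to first identify block partitions of $(a',b')$ with partitions of the index set $\{1,\dots,m\}$ into nonempty zero-sum sets. A set of unmatched elements is balanced exactly when its signed values sum to zero, so this is immediate from the definitions. The preliminary claim that $\sum_j v_j=0$ holds because both lines have sum $s$ and every removed pair takes the same value from each side. Both parts then become statements about the maximum number $\gamma$ of parts in a partition of $\{1,\dots,m\}$ into nonempty zero-sum sets.

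For part 1, I will prove the prefix characterization in two directions.
\begin{itemize}
\item Given a partition into $r$ zero-sum blocks $B_1,\dots,B_r$, list the elements block by block. The prefixes ending at the end of each block are nonempty zero-sum prefixes, so some ordering has at least $r$ of them.
\item Conversely, take an ordering with zero-sum nonempty prefixes at positions $t_1<\dots<t_r$. The consecutive differences of these prefixes are nonempty zero-sum sets. The last one is $\{1,\dots,m\}$ minus the $t_r$-prefix, which is zero-sum since the total is zero. If it is nonempty, merge it into the last segment; it is automatically empty if $t_r=m$. This gives a partition with $r$ blocks.
\end{itemize}

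To get the recursion, I will show by induction on $|S|$ that $\mathrm{dp}(S)$ is the largest number of nonempty zero-sum prefixes, among prefixes of length at most $|S|$, over orderings of $S$ in which $S$ occupies the first $|S|$ places. Removing the last element $i$ of such an ordering leaves an ordering of $S\setminus\{i\}$. The full prefix $S$ itself contributes $[\sum_{j\in S}v_j=0]$, and maximizing over the choice of $i$ gives the recursion. Applying this to $S=\{1,\dots,m\}$ finishes part 1.

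For part 2, I will show by strong induction on $|S|$ that, for every nonempty zero-sum $S$, $g(S)$ equals the maximum number of blocks of a zero-sum partition of $S$.
\begin{itemize}
\item \emph{Upper bound.} The value $1$ is realized by the trivial partition $\{S\}$. If $Z\ni\min S$ is a proper zero-sum subset, then $S\setminus Z$ is nonempty and zero-sum. By induction it has a partition with $g(S\setminus Z)$ blocks, and adding $Z$ gives $1+g(S\setminus Z)$ blocks.
\item \emph{Lower bound.} Take an optimal partition of $S$ with $r\ge2$ blocks, and let $Z$ be the block containing $\min S$. Then $Z\in\mathcal Z$ is a proper subset of $S$, and the remaining $r-1$ blocks partition $S\setminus Z$. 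So $g(S\setminus Z)\ge r-1$ by induction, and $g(S)\ge r$.
\end{itemize}

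Singletons of nonzero value are not zero-sum, but this causes no trouble since all sets involved are zero-sum by construction. The main care point is the bookkeeping in part 1: handling the tail after the last zero-sum prefix, and stating the right invariant for $\mathrm{dp}(S)$ on subsets so that the recursion is justified. Anchoring at $\min S$ in part 2 is routine.
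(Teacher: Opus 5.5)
Your proof is correct and follows the paper's argument: for part 1, listing a partition block by block and, conversely, cutting an ordering at its zero-sum prefixes; for part 2, splitting off the block that contains $\min S$, with an induction in both directions. You also spell out the induction behind the $\mathrm{dp}$ recursion and the handling of the tail after the last zero-sum prefix, both of which the paper leaves implicit.
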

\begin{proof}
\begin{enumerate}
\item A block partition, listed block by block, gives an ordering with a zero-sum prefix after every block. Conversely,
  consecutive zero-sum prefixes cut an ordering into balanced blocks.
\item In a block partition of $S$, some block contains $\min S$. This block is $S$ itself or a proper zero-sum subset
  $Z$, and $S\setminus Z$ is again a nonempty zero-sum set. This gives $\gamma^*(a',b')\le g(\{1,\dots,m\})$.
  Conversely, every value of the recursion is the number of blocks of a block partition built from $Z$ and a block
  partition of $S\setminus Z$, so equality holds. Any fixed choice of the distinguished element would do.
\end{enumerate}
\end{proof}

\begin{proposition}[packing bound]\label{prop:packing}
Let $a'$ and $b'$ be the unmatched values of Lemma~\ref{lem:equal}, so that no value occurs in both, and let
$m=|a'|+|b'|$. Then every block of unmatched values has at least three elements. Let $\mathcal Z_T$ be the family of
zero-sum subsets of the signed unmatched values with at least three and at most $T\ge3$ elements, and let
$w(Z)=1-|Z|/(T+1)$. For every $u\ge0$ with $\sum_{i\in Z}u_i\ge w(Z)$ for all $Z\in\mathcal Z_T$,
\[ \gamma^*(a',b')\le\Bigl\lfloor\frac{m}{T+1}+\sum_iu_i\Bigr\rfloor. \]
\end{proposition}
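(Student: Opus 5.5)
First I show that every block of unmatched values has at least three elements. A block is a nonempty balanced set, and all values are positive, so it contains at least one element of $a'$ and at least one of $b'$. A block with exactly two elements would be $\{x_a,y_b\}$ with $x=y$, that is, a value that occurs in both $a'$ and $b'$. This contradicts the definition of the unmatched values after a maximum matching of equal values (Lemma~\ref{lem:equal}), since such a pair could be added to the matching.

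Next I take an optimal block partition $\mathcal B$ of $(a',b')$ with $\gamma^*(a',b')$ blocks and split it into the small blocks $\mathcal B_s=\mathcal B\cap\mathcal Z_T$, those with at most $T$ elements, and the large blocks $\mathcal B_\ell$, those with at least $T+1$ elements. Every block is a zero-sum set of the signed values with at least three elements, so this split is exhaustive. The key identity is
\[ |\mathcal B|=\sum_{B\in\mathcal B}\Bigl(w(B)+\frac{|B|}{T+1}\Bigr)=\frac{m}{T+1}+\sum_{B\in\mathcal B}w(B), \]
which holds because $w(B)+|B|/(T+1)=1$ for every $B$ and the blocks partition the $m$ values.

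For a large block, $|B|\ge T+1$ gives $w(B)\le0$. For a small block, the dual feasibility gives $w(B)\le\sum_{i\in B}u_i$. Since the blocks are disjoint and $u\ge0$,
\[ \sum_{B\in\mathcal B}w(B)\le\sum_{B\in\mathcal B_s}\sum_{i\in B}u_i\le\sum_iu_i. \]
Hence $\gamma^*(a',b')\le m/(T+1)+\sum_iu_i$, and the floor may be taken because $\gamma^*$ is an integer.

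No step is a real obstacle. The only point that needs care is the claim about blocks with three or more elements. That claim depends on $M_2$ being a maximum matching, so that no equal pair remains between $a'$ and $b'$, and on the positivity of the values, which rules out one-sided blocks.
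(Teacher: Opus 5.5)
Your proof is correct and follows the paper's argument: first the three-element claim, from positivity and the maximality of the equal-value matching, and then the small blocks charged to $\sum_{i\in B}u_i$ via dual feasibility and disjointness. Your identity $|\mathcal B|=m/(T+1)+\sum_{B}w(B)$, with $w(B)\le 0$ for the large blocks, is only a rephrasing of the paper's count $\gamma\le|\mathcal P|+(m-|\bigcup\mathcal P|)/(T+1)$.
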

\begin{proof}
A single element has a nonzero sum, two elements of the same line have the same sign, and two equal values of different
lines have been matched before (Lemma~\ref{lem:equal}). Hence every block has at least three elements. Consider a block
partition with $\gamma$ blocks, and let $\mathcal P$ be the set of its blocks with at most $T$ elements. These blocks
belong to $\mathcal Z_T$ and are disjoint, and each of the other blocks has at least $T+1$ elements. Hence
\begin{align*}
\gamma&\le|\mathcal P|+\frac{m-|\bigcup\mathcal P|}{T+1}=\frac{m}{T+1}+\sum_{Z\in\mathcal P}w(Z)\\
&\le\frac{m}{T+1}+\sum_{Z\in\mathcal P}\sum_{i\in Z}u_i\le\frac{m}{T+1}+\sum_iu_i,
\end{align*}
where the last inequality holds because the blocks in $\mathcal P$ are disjoint and $u\ge0$. Since $\gamma^*(a',b')$ is
an integer, the bound may be rounded down.
\end{proof}

\begin{proposition}[SAT model]\label{prop:sat}
For $k\le s$, the SAT model for $k$ terms is satisfiable if and only if $D$ has a decomposition with precision $1/s$ and
at most $k$ distinct permutations. Consequently, if the model for $k-1\le s$ terms is unsatisfiable, then $k^*(D)\ge k$.
\end{proposition}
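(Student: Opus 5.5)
The plan is to prove both directions of the equivalence directly from the constraints of the SAT model as described in Section~\ref{sec:certify}. The model has $k$ term slots with variables $y_{l,e}$ and binary weights $c_l$. The constraints say that each slot covers exactly one position per line, that $c_l\ge1$, that $c_l\le D_e$ whenever $y_{l,e}$ holds, that the covering weights sum to $D_e$ at every position, and that the weights are sorted. I will take for granted that the adder and comparator encodings are faithful. For every assignment of the weight bits, the adder circuit forces the sum bits to equal the true integer sum; fixing the overflow to zero is harmless because every partial sum is bounded by the target entry. Under this assumption the model is equivalent to the arithmetic constraints, and the proof is purely combinatorial.

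\emph{Satisfiable implies a decomposition.} Take a satisfying assignment. For each slot $l$, the exactly-one constraints on rows and columns make $\{e: y_{l,e}\}$ the position set of a permutation $\pi_l$. Indeed, one position per row defines a map $\pi_l$, and one position per column makes it injective. The sum constraints then give $D=\sum_l c_lP_{\pi_l}$ with integers $c_l\ge1$. Slots with equal permutations can be merged by adding their weights, so $D$ has a decomposition with precision $1/s$ that uses at most $k$ distinct permutations.

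\emph{A decomposition implies satisfiable.} Suppose a decomposition with precision $1/s$ uses $k'\le k$ distinct permutations. If $k'<k$, we need extra slots, and this is where $k\le s$ enters; I expect it to be the only delicate step. Since $\sum_l c_l=s\ge k$ by Lemma~\ref{lem:partition}, we can repeatedly split a term with $c_l\ge2$ into two terms with the same permutation and positive integer weights. This keeps the sum $s$ and the matrix $D$, and we stop once there are exactly $k$ terms. Every weight is then at least $1$, and each $c_l\le D_e$ for the covered positions, because $D_e$ is a sum of positive weights that includes $c_l$. We sort the weights and set the $y$ variables and the adder bits accordingly; all constraints then hold, so the model is satisfiable.

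\emph{Consequence.} If the model for $k-1\le s$ is unsatisfiable, no decomposition with precision $1/s$ has at most $k-1$ distinct permutations. A decomposition of length $\ell$ uses at most $\ell$ distinct permutations, so every decomposition has length at least $k$. Hence $k^*(D)\ge k$.
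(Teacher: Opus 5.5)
Your proof is correct and follows essentially the same route as the paper: you read off $k$ permutations and integer weights from a satisfying assignment and merge repeated permutations, and conversely you split a term of weight at least two (one exists while there are fewer than $k\le s=\sum_l c_l$ terms) until there are exactly $k$ terms, then sort. The only addition needed is to merge terms with equal permutations at the start of the converse, so that you begin with $k'\le k$ terms and not merely $k'$ distinct permutations; this step is trivial, and the paper also leaves it implicit.
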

\begin{proof}
A satisfying assignment gives $k$ permutations and $k$ integer weights $c_l\ge1$ such that, for every position $e$, the
weights of the terms covering $e$ sum to $D_e$. This is a decomposition with exactly $k$ terms, possibly with equal
permutations, and merging the terms with equal permutations leaves at most $k$ distinct permutations. Conversely, given
a decomposition with $k'<k$ terms, some weight is at least two because $\sum_lc_l=s\ge k>k'$, and we split its term into
two terms with the same permutation. Repeating this until there are $k$ terms and sorting the weights gives a satisfying
assignment.
\end{proof}

\begin{proposition}[case split]\label{prop:cubes}
For a position $e=(i,j)$ with $D_e>0$, let $U_e$ be the largest value of $\min_{i'}D_{i'\pi(i')}$ over the permutations
$\pi$ with $\pi(i)=j$. For a line with nonzero positions $e_1,\dots,e_q$, we call integers $r_1,\dots,r_q$ with $\lceil
D_{e_t}/U_{e_t}\rceil\le r_t\le D_{e_t}$ and $\sum_tr_t=k$ an \emph{admissible composition} of $k$ into $q$ parts. The
case of a composition $(r_1,\dots,r_q)$ fixes the terms $1,\dots,r_1$ to position $e_1$, the next $r_2$ terms to
position $e_2$, and so on, and sorts the weights within each group instead of sorting all weights. Then $D$ has a
decomposition with precision $1/s$ and $k$ terms if and only if the case of some admissible composition is satisfiable.
\end{proposition}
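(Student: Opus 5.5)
The plan is to prove the two directions separately, with the nontrivial direction being ``decomposition exists $\Rightarrow$ some admissible case is satisfiable''. The converse is immediate. A satisfying assignment of the case of any composition satisfies all constraints of the SAT model except the global sorting of the weights. The global sorting was only a symmetry-breaking device, so the assignment still gives $k$ permutations with integer weights $c_l\ge1$ whose covering sums equal the entries. As in the proof of Proposition~\ref{prop:sat}, this is a decomposition with precision $1/s$ and $k$ terms, possibly with repeated permutations.

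For the forward direction, I start from a decomposition with precision $1/s$ and $k$ terms. It may have repeated permutations, which is allowed here, matching Proposition~\ref{prop:sat}, where shorter decompositions are padded by splitting terms when $k\le s$. Fix the chosen line with nonzero positions $e_1,\dots,e_q$, and let $r_t$ be the number of terms covering $e_t$. By Lemma~\ref{lem:partition}, the index sets of the positions of the line partition $\{1,\dots,k\}$, so $\sum_t r_t=k$, and every $r_t\ge1$ since $D_{e_t}>0$. Since each weight is an integer at least $1$ and the weights covering $e_t$ sum to $D_{e_t}$, we get $r_t\le D_{e_t}$.

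The key step is the lower bound $r_t\ge\lceil D_{e_t}/U_{e_t}\rceil$. Take a term $(c_l,\pi_l)$ covering $e_t=(i,j)$, so $\pi_l(i)=j$. Every position $(i',\pi_l(i'))$ it covers has $D_{i'\pi_l(i')}\ge c_l$, since weights are positive and each entry is the sum of the weights covering it (Lemma~\ref{lem:partition}). Hence $c_l\le\min_{i'}D_{i'\pi_l(i')}\le U_{e_t}$. Summing over the $r_t$ terms gives $D_{e_t}\le r_tU_{e_t}$, and integrality of $r_t$ gives the ceiling. If the chosen line is a column, the same argument applies with the roles of rows and columns exchanged. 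So $(r_1,\dots,r_q)$ is an admissible composition.

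It remains to relabel the terms so they match the case's structure. I sort the terms first by the index $t$ of the position they cover in the chosen line, and within each group by weight. Terms $1,\dots,r_1$ then cover $e_1$, the next $r_2$ terms cover $e_2$, and so on, with sorted weights in each group. Reading off the variables $y_{l,e}$ and the binary encodings of the $c_l$ gives a satisfying assignment of this case. The one point to check is that each binary weight fits its encoding width. This holds because $c_l\le D_e$ for every covered $e$, which is exactly the comparator constraint already shown above.

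I expect the main obstacle to be bookkeeping rather than mathematics. The statement must be read with ``$k$ terms'' allowing repeated permutations. The inequality $r_t\le D_{e_t}$ must be justified from integrality. The case constraints must also be confirmed to be exactly the model's constraints with the group structure replacing global sorting.
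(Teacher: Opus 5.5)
Your proposal is correct and follows the paper's proof essentially step by step. Like the paper, you count the terms at each position of the chosen line, get $r_t\le D_{e_t}$ from integrality and $r_t\ge\lceil D_{e_t}/U_{e_t}\rceil$ from $c_l\le\min_{i'}D_{i'\pi_l(i')}\le U_{e_t}$, then sort the terms by position and by weight within each group, and read the converse off directly. (Your remark about exchanging rows and columns is unnecessary: $U_e$ is defined through the permutations covering $e$, so it does not depend on which line was chosen.)
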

\begin{proof}
Given a decomposition with $k$ terms, let $r_t$ be the number of terms that cover $e_t$. Each term covers exactly one
position of the line, so $\sum_tr_t=k$. Every weight is at least one, so $r_t\le D_{e_t}$. A term $(c_l,\pi_l)$ that
covers $e_t$ satisfies $c_l\le\min_{i'}D_{i'\pi_l(i')}\le U_{e_t}$, because the other terms contribute nonnegative
amounts to every entry that $\pi_l$ covers, so $D_{e_t}\le r_tU_{e_t}$ and $r_t\ge\lceil D_{e_t}/U_{e_t}\rceil$. This
bound also holds for real weights. Ordering the terms by their position in the line, and by weight within each position,
gives a satisfying assignment of the case of $(r_1,\dots,r_q)$. Conversely, every satisfying assignment of a case is a
decomposition with $k$ terms.
\end{proof}

\begin{proposition}[zeroing order]\label{prop:zeroing}
Let $(c_l,\pi_l)_{l=1}^k$ be a decomposition of $D$ with a zeroing order $l_1,\dots,l_k$, and let
$R_t=\sum_{u=t}^kc_{l_u}P_{\pi_{l_u}}$ for $t=1,\dots,k$. Then $c_{l_t}=\min_i(R_t)_{i\pi_{l_t}(i)}$ for every $t$, that
is, the decomposition is the sequence of the zeroing steps with the permutations $\pi_{l_1},\dots,\pi_{l_k}$, applied to
$D=R_1$.
\end{proposition}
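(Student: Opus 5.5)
The plan is to prove the claim for each fixed $t$ by a two-sided bound on $\min_i(R_t)_{i\pi_{l_t}(i)}$, and then to read off the interpretation as a sequence of zeroing steps by induction on $t$.

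First, I will record two facts about the partial sums $R_t$. By definition $R_1=D$, and $R_t=c_{l_t}P_{\pi_{l_t}}+R_{t+1}$ for $t<k$, with $R_{k+1}=0$. Each $R_t$ is a sum of nonnegative multiples of permutation matrices, so it is nonnegative. At every position $e$, the entry $(R_t)_e$ equals the sum of the weights $c_{l_u}$ over those $u\ge t$ whose permutation $\pi_{l_u}$ covers $e$. This is the same computation as in the proof of Lemma~\ref{lem:partition}, now applied to the subfamily of terms $u\ge t$.

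Fix $t$ and the permutation $\pi=\pi_{l_t}$. For the \emph{upper} bound: every position $(i,\pi(i))$ is covered by term $l_t$, so $(R_t)_{i\pi(i)}\ge c_{l_t}$. The reason is that the remaining contributions come from $R_{t+1}$, which is nonnegative. Hence $\min_i(R_t)_{i\pi(i)}\ge c_{l_t}$. For the \emph{lower} bound, the zeroing order gives a position $e=(i_0,\pi(i_0))$ covered by $\pi_{l_t}$ and by none of $\pi_{l_{t'}}$ with $t'>t$. Then $(R_{t+1})_e=0$ by the description of entries above, so $(R_t)_e=c_{l_t}$. This yields $\min_i(R_t)_{i\pi(i)}\le c_{l_t}$, and together the two bounds give equality.

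Finally, I will argue by induction on $t$. Assume the residual before step $t$ is $R_t$, which holds at the start since $R_1=D$. The permutation $\pi_{l_t}$ covers only positive entries of $R_t$, because each covered entry is at least $c_{l_t}>0$. So subtracting $P_{\pi_{l_t}}$ with the largest feasible weight is a zeroing step. Its weight is $\min_i(R_t)_{i\pi_{l_t}(i)}=c_{l_t}$, and the new residual is $R_t-c_{l_t}P_{\pi_{l_t}}=R_{t+1}$. After $k$ steps the residual is $R_{k+1}=0$, so the steps reproduce the decomposition.

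There is no real obstacle here. The only point needing care is that the witness position $e$ has $(R_{t+1})_e=0$ exactly, not merely small. This follows because all weights are positive and $e$ lies in none of the later permutations. I will state this explicitly via the entrywise formula for $R_{t+1}$.
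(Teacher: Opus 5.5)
Your proof is correct and follows the paper's argument. The entrywise inequality $R_t\ge c_{l_t}P_{\pi_{l_t}}$ gives $\min_i(R_t)_{i\pi_{l_t}(i)}\ge c_{l_t}$, and the position supplied by the zeroing order has entry exactly $c_{l_t}$, so equality holds; your induction identifying the successive residuals with the $R_t$ spells out the ``that is'' part, which the paper leaves implicit.
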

\begin{proof}
Since $R_t\ge c_{l_t}P_{\pi_{l_t}}$ entrywise, the smallest entry of $R_t$ at the positions covered by $\pi_{l_t}$ is at
least $c_{l_t}$. At a position that is covered by $\pi_{l_t}$ and by no later term, the entry of $R_t$ is exactly
$c_{l_t}$.
\end{proof}

\section{Test instances}\label{app:instances}

Each test instance is generated from $k_0$ random permutations, with $k_0=n$ for sparse and $k_0=n^2$ for dense
instances~\cite{koch2025}. The weights are multiples of $1/s$, namely the gaps between $k_0-1$ sorted random multiples
of $1/s$ in $[0,1]$, where $s=10^d$ and $d$ is the number of decimal digits of $n^2$ plus two. Table~\ref{tab:stats}
gives statistics of the \emph{generating decompositions}, made of the generating permutations and their weights, which
the instance files contain. These statistics only describe the instances, and RCS reads none of these data
(Section~\ref{sec:implementation}). In sparse instances, a nonzero position is covered by about $1.6$ generating
permutations on average, close to the limit $1/(1-\mathrm{e}^{-1})\approx1.58$, and more than $90\,\%$ of the generating
permutations have a position that no other generating permutation covers, a \emph{witness position}. In dense instances,
a position is covered by about $n$ generating permutations, and the share of generating permutations with a witness
position falls from $54\,\%$ for $n=3$ to $1\,\%$ for $n=6$. For $n\ge7$, no generating permutation has one.

\begin{table}[ht]\centering\small
\caption{Statistics of the test instances, as means per size and class, computed from the permutations and weights from
which each matrix was generated. The columns give the scale $s$ of the scaled matrix $D=sA$, the number of nonzero
entries per row or column, the number of generating permutations through a nonzero position, and the share of generating
permutations with a position that no other one covers.}\label{tab:stats}
\StatsTable
\end{table}

\FloatBarrier
\section{Results per family}\label{app:results}

A \emph{family} is the set of instances of the same size and class, and Table~\ref{tab:family} gives the results per
family. The lower bounds of the dense families with $4\le n\le9$ include SAT certificates, which hold only for weights
with precision $1/s$, and all other bounds also hold for real weights. Appendix~\ref{app:all} gives the results for every
instance.

\begin{table}[ht]\centering\small
\caption{Results per family of ten instances (five for the dense family with $n=100$). The columns are as in
Table~\ref{tab:groups}, but the length and the bound are not divided by $n$.}\label{tab:family}
\FamilyTable
\end{table}

\FloatBarrier
\section{Results per instance}\label{app:all}

Table~\ref{tab:all} gives, for every test instance, the lower bound on the minimum length, the length of the best decomposition found in five independent runs, the gap between the two, whether optimality is proven, the number of runs that reached this length, the time from the start of a run until it first found a decomposition of its final length and the running time of the run (both rounds), in seconds and as means over the five runs. A run stops early when its decomposition is proven optimal. A bound from a SAT certificate (dense matrices with $n\le9$) holds only for weights with precision $1/s$, and all other bounds also hold for real weights.

\begingroup
\renewcommand{\tablename}{Table}
{\small\begin{longtable}{lrrrcrrr}
\caption{Lower bound, length of the best decomposition, gap, proof of optimality, time to the best decomposition and running time for every test instance.}\label{tab:all}\\
\toprule
instance & bound & length & gap & proven & runs & time to best [s] & running time [s] \\ \midrule\endhead
\texttt{B3\_3\_1} & 2 & 2 & 0 & $\checkmark$ & 5/5 & 0.00 & 0.00 \\
\texttt{B3\_3\_2} & 3 & 3 & 0 & $\checkmark$ & 5/5 & 0.00 & 0.00 \\
\texttt{B3\_3\_3} & 3 & 3 & 0 & $\checkmark$ & 5/5 & 0.00 & 0.00 \\
\texttt{B3\_3\_4} & 3 & 3 & 0 & $\checkmark$ & 5/5 & 0.00 & 0.00 \\
\texttt{B3\_3\_5} & 3 & 3 & 0 & $\checkmark$ & 5/5 & 0.00 & 0.00 \\
\texttt{B3\_3\_6} & 3 & 3 & 0 & $\checkmark$ & 5/5 & 0.00 & 0.00 \\
\texttt{B3\_3\_7} & 2 & 2 & 0 & $\checkmark$ & 5/5 & 0.00 & 0.00 \\
\texttt{B3\_3\_8} & 3 & 3 & 0 & $\checkmark$ & 5/5 & 0.00 & 0.00 \\
\texttt{B3\_3\_9} & 2 & 2 & 0 & $\checkmark$ & 5/5 & 0.00 & 0.00 \\
\texttt{B3\_3\_10} & 3 & 3 & 0 & $\checkmark$ & 5/5 & 0.00 & 0.00 \\
\midrule
\texttt{B4\_4\_1} & 4 & 4 & 0 & $\checkmark$ & 5/5 & 0.00 & 0.00 \\
\texttt{B4\_4\_2} & 4 & 4 & 0 & $\checkmark$ & 5/5 & 0.00 & 0.00 \\
\texttt{B4\_4\_3} & 3 & 3 & 0 & $\checkmark$ & 5/5 & 0.00 & 0.00 \\
\texttt{B4\_4\_4} & 3 & 3 & 0 & $\checkmark$ & 5/5 & 0.00 & 0.00 \\
\texttt{B4\_4\_5} & 4 & 4 & 0 & $\checkmark$ & 5/5 & 0.00 & 0.00 \\
\texttt{B4\_4\_6} & 4 & 4 & 0 & $\checkmark$ & 5/5 & 0.00 & 0.00 \\
\texttt{B4\_4\_7} & 4 & 4 & 0 & $\checkmark$ & 5/5 & 0.00 & 0.00 \\
\texttt{B4\_4\_8} & 4 & 4 & 0 & $\checkmark$ & 5/5 & 0.00 & 0.00 \\
\texttt{B4\_4\_9} & 4 & 4 & 0 & $\checkmark$ & 5/5 & 0.00 & 0.00 \\
\texttt{B4\_4\_10} & 4 & 4 & 0 & $\checkmark$ & 5/5 & 0.00 & 0.00 \\
\midrule
\texttt{B5\_5\_1} & 5 & 5 & 0 & $\checkmark$ & 5/5 & 0.00 & 0.00 \\
\texttt{B5\_5\_2} & 5 & 5 & 0 & $\checkmark$ & 5/5 & 0.00 & 0.00 \\
\texttt{B5\_5\_3} & 5 & 5 & 0 & $\checkmark$ & 5/5 & 0.00 & 0.00 \\
\texttt{B5\_5\_4} & 5 & 5 & 0 & $\checkmark$ & 5/5 & 0.00 & 0.00 \\
\texttt{B5\_5\_5} & 5 & 5 & 0 & $\checkmark$ & 5/5 & 0.00 & 0.00 \\
\texttt{B5\_5\_6} & 5 & 5 & 0 & $\checkmark$ & 5/5 & 0.00 & 0.00 \\
\texttt{B5\_5\_7} & 5 & 5 & 0 & $\checkmark$ & 5/5 & 0.00 & 0.00 \\
\texttt{B5\_5\_8} & 5 & 5 & 0 & $\checkmark$ & 5/5 & 0.00 & 0.00 \\
\texttt{B5\_5\_9} & 3 & 3 & 0 & $\checkmark$ & 5/5 & 0.00 & 0.00 \\
\texttt{B5\_5\_10} & 5 & 5 & 0 & $\checkmark$ & 5/5 & 0.00 & 0.00 \\
\midrule
\texttt{B6\_6\_1} & 6 & 6 & 0 & $\checkmark$ & 5/5 & 0.00 & 0.00 \\
\texttt{B6\_6\_2} & 6 & 6 & 0 & $\checkmark$ & 5/5 & 0.00 & 0.00 \\
\texttt{B6\_6\_3} & 6 & 6 & 0 & $\checkmark$ & 5/5 & 0.00 & 0.00 \\
\texttt{B6\_6\_4} & 6 & 6 & 0 & $\checkmark$ & 5/5 & 0.00 & 0.00 \\
\texttt{B6\_6\_5} & 6 & 6 & 0 & $\checkmark$ & 5/5 & 0.00 & 0.00 \\
\texttt{B6\_6\_6} & 6 & 6 & 0 & $\checkmark$ & 5/5 & 0.00 & 0.00 \\
\texttt{B6\_6\_7} & 6 & 6 & 0 & $\checkmark$ & 5/5 & 0.00 & 0.00 \\
\texttt{B6\_6\_8} & 6 & 6 & 0 & $\checkmark$ & 5/5 & 0.00 & 0.00 \\
\texttt{B6\_6\_9} & 6 & 6 & 0 & $\checkmark$ & 5/5 & 0.00 & 0.00 \\
\texttt{B6\_6\_10} & 6 & 6 & 0 & $\checkmark$ & 5/5 & 0.00 & 0.00 \\
\midrule
\texttt{B7\_7\_1} & 7 & 7 & 0 & $\checkmark$ & 5/5 & 0.00 & 0.00 \\
\texttt{B7\_7\_2} & 7 & 7 & 0 & $\checkmark$ & 5/5 & 0.00 & 0.00 \\
\texttt{B7\_7\_3} & 7 & 7 & 0 & $\checkmark$ & 5/5 & 0.00 & 0.00 \\
\texttt{B7\_7\_4} & 7 & 7 & 0 & $\checkmark$ & 5/5 & 0.00 & 0.00 \\
\texttt{B7\_7\_5} & 7 & 7 & 0 & $\checkmark$ & 5/5 & 0.00 & 0.00 \\
\texttt{B7\_7\_6} & 7 & 7 & 0 & $\checkmark$ & 5/5 & 0.00 & 0.01 \\
\texttt{B7\_7\_7} & 7 & 7 & 0 & $\checkmark$ & 5/5 & 0.00 & 0.01 \\
\texttt{B7\_7\_8} & 7 & 7 & 0 & $\checkmark$ & 5/5 & 0.00 & 0.00 \\
\texttt{B7\_7\_9} & 7 & 7 & 0 & $\checkmark$ & 5/5 & 0.00 & 0.00 \\
\texttt{B7\_7\_10} & 7 & 7 & 0 & $\checkmark$ & 5/5 & 0.00 & 0.01 \\
\midrule
\texttt{B8\_8\_1} & 8 & 8 & 0 & $\checkmark$ & 5/5 & 0.00 & 0.00 \\
\texttt{B8\_8\_2} & 8 & 8 & 0 & $\checkmark$ & 5/5 & 0.00 & 0.00 \\
\texttt{B8\_8\_3} & 8 & 8 & 0 & $\checkmark$ & 5/5 & 0.00 & 0.00 \\
\texttt{B8\_8\_4} & 8 & 8 & 0 & $\checkmark$ & 5/5 & 0.00 & 0.00 \\
\texttt{B8\_8\_5} & 8 & 8 & 0 & $\checkmark$ & 5/5 & 0.00 & 0.00 \\
\texttt{B8\_8\_6} & 8 & 8 & 0 & $\checkmark$ & 5/5 & 0.00 & 0.00 \\
\texttt{B8\_8\_7} & 8 & 8 & 0 & $\checkmark$ & 5/5 & 0.00 & 0.00 \\
\texttt{B8\_8\_8} & 8 & 8 & 0 & $\checkmark$ & 5/5 & 0.00 & 0.00 \\
\texttt{B8\_8\_9} & 8 & 8 & 0 & $\checkmark$ & 5/5 & 0.00 & 0.00 \\
\texttt{B8\_8\_10} & 8 & 8 & 0 & $\checkmark$ & 5/5 & 0.00 & 0.00 \\
\midrule
\texttt{B9\_9\_1} & 9 & 9 & 0 & $\checkmark$ & 5/5 & 0.00 & 0.00 \\
\texttt{B9\_9\_2} & 9 & 9 & 0 & $\checkmark$ & 5/5 & 0.00 & 0.00 \\
\texttt{B9\_9\_3} & 9 & 9 & 0 & $\checkmark$ & 5/5 & 0.00 & 0.00 \\
\texttt{B9\_9\_4} & 9 & 9 & 0 & $\checkmark$ & 5/5 & 0.00 & 0.00 \\
\texttt{B9\_9\_5} & 9 & 9 & 0 & $\checkmark$ & 5/5 & 0.00 & 0.00 \\
\texttt{B9\_9\_6} & 9 & 9 & 0 & $\checkmark$ & 5/5 & 0.00 & 0.00 \\
\texttt{B9\_9\_7} & 9 & 9 & 0 & $\checkmark$ & 5/5 & 0.01 & 0.01 \\
\texttt{B9\_9\_8} & 9 & 9 & 0 & $\checkmark$ & 5/5 & 0.00 & 0.00 \\
\texttt{B9\_9\_9} & 9 & 9 & 0 & $\checkmark$ & 5/5 & 0.00 & 0.00 \\
\texttt{B9\_9\_10} & 9 & 9 & 0 & $\checkmark$ & 5/5 & 0.00 & 0.00 \\
\midrule
\texttt{B10\_10\_1} & 10 & 10 & 0 & $\checkmark$ & 5/5 & 0.00 & 0.00 \\
\texttt{B10\_10\_2} & 10 & 10 & 0 & $\checkmark$ & 5/5 & 0.00 & 0.00 \\
\texttt{B10\_10\_3} & 10 & 10 & 0 & $\checkmark$ & 5/5 & 0.00 & 0.00 \\
\texttt{B10\_10\_4} & 10 & 10 & 0 & $\checkmark$ & 5/5 & 0.00 & 0.00 \\
\texttt{B10\_10\_5} & 10 & 10 & 0 & $\checkmark$ & 5/5 & 0.00 & 0.00 \\
\texttt{B10\_10\_6} & 10 & 10 & 0 & $\checkmark$ & 5/5 & 0.00 & 0.00 \\
\texttt{B10\_10\_7} & 10 & 10 & 0 & $\checkmark$ & 5/5 & 0.00 & 0.00 \\
\texttt{B10\_10\_8} & 10 & 10 & 0 & $\checkmark$ & 5/5 & 0.00 & 0.00 \\
\texttt{B10\_10\_9} & 10 & 10 & 0 & $\checkmark$ & 5/5 & 0.00 & 0.00 \\
\texttt{B10\_10\_10} & 10 & 10 & 0 & $\checkmark$ & 5/5 & 0.00 & 0.00 \\
\midrule
\texttt{B11\_11\_1} & 11 & 11 & 0 & $\checkmark$ & 5/5 & 0.00 & 0.00 \\
\texttt{B11\_11\_2} & 11 & 11 & 0 & $\checkmark$ & 5/5 & 0.00 & 0.00 \\
\texttt{B11\_11\_3} & 11 & 11 & 0 & $\checkmark$ & 5/5 & 0.00 & 0.00 \\
\texttt{B11\_11\_4} & 11 & 11 & 0 & $\checkmark$ & 5/5 & 0.00 & 0.00 \\
\texttt{B11\_11\_5} & 11 & 11 & 0 & $\checkmark$ & 5/5 & 0.00 & 0.00 \\
\texttt{B11\_11\_6} & 11 & 11 & 0 & $\checkmark$ & 5/5 & 0.00 & 0.00 \\
\texttt{B11\_11\_7} & 11 & 11 & 0 & $\checkmark$ & 5/5 & 0.00 & 0.00 \\
\texttt{B11\_11\_8} & 11 & 11 & 0 & $\checkmark$ & 5/5 & 0.00 & 0.00 \\
\texttt{B11\_11\_9} & 11 & 11 & 0 & $\checkmark$ & 5/5 & 0.00 & 0.00 \\
\texttt{B11\_11\_10} & 11 & 11 & 0 & $\checkmark$ & 5/5 & 0.00 & 0.00 \\
\midrule
\texttt{B12\_12\_1} & 12 & 12 & 0 & $\checkmark$ & 5/5 & 0.00 & 0.00 \\
\texttt{B12\_12\_2} & 12 & 12 & 0 & $\checkmark$ & 5/5 & 0.00 & 0.00 \\
\texttt{B12\_12\_3} & 12 & 12 & 0 & $\checkmark$ & 5/5 & 0.00 & 0.00 \\
\texttt{B12\_12\_4} & 12 & 12 & 0 & $\checkmark$ & 5/5 & 0.00 & 0.00 \\
\texttt{B12\_12\_5} & 12 & 12 & 0 & $\checkmark$ & 5/5 & 0.00 & 0.00 \\
\texttt{B12\_12\_6} & 12 & 12 & 0 & $\checkmark$ & 5/5 & 0.00 & 0.00 \\
\texttt{B12\_12\_7} & 12 & 12 & 0 & $\checkmark$ & 5/5 & 0.00 & 0.00 \\
\texttt{B12\_12\_8} & 12 & 12 & 0 & $\checkmark$ & 5/5 & 0.00 & 0.00 \\
\texttt{B12\_12\_9} & 12 & 12 & 0 & $\checkmark$ & 5/5 & 0.00 & 0.01 \\
\texttt{B12\_12\_10} & 12 & 12 & 0 & $\checkmark$ & 5/5 & 0.00 & 0.00 \\
\midrule
\texttt{B13\_13\_1} & 13 & 13 & 0 & $\checkmark$ & 5/5 & 0.00 & 0.00 \\
\texttt{B13\_13\_2} & 13 & 13 & 0 & $\checkmark$ & 5/5 & 0.00 & 0.00 \\
\texttt{B13\_13\_3} & 13 & 13 & 0 & $\checkmark$ & 5/5 & 0.00 & 0.00 \\
\texttt{B13\_13\_4} & 13 & 13 & 0 & $\checkmark$ & 5/5 & 0.00 & 0.00 \\
\texttt{B13\_13\_5} & 13 & 13 & 0 & $\checkmark$ & 5/5 & 0.00 & 0.00 \\
\texttt{B13\_13\_6} & 13 & 13 & 0 & $\checkmark$ & 5/5 & 0.00 & 0.00 \\
\texttt{B13\_13\_7} & 13 & 13 & 0 & $\checkmark$ & 5/5 & 0.00 & 0.00 \\
\texttt{B13\_13\_8} & 13 & 13 & 0 & $\checkmark$ & 5/5 & 0.00 & 0.00 \\
\texttt{B13\_13\_9} & 13 & 13 & 0 & $\checkmark$ & 5/5 & 0.00 & 0.00 \\
\texttt{B13\_13\_10} & 13 & 13 & 0 & $\checkmark$ & 5/5 & 0.00 & 0.00 \\
\midrule
\texttt{B14\_14\_1} & 14 & 14 & 0 & $\checkmark$ & 5/5 & 0.00 & 0.00 \\
\texttt{B14\_14\_2} & 14 & 14 & 0 & $\checkmark$ & 5/5 & 0.00 & 0.00 \\
\texttt{B14\_14\_3} & 14 & 14 & 0 & $\checkmark$ & 5/5 & 0.00 & 0.00 \\
\texttt{B14\_14\_4} & 14 & 14 & 0 & $\checkmark$ & 5/5 & 0.00 & 0.00 \\
\texttt{B14\_14\_5} & 14 & 14 & 0 & $\checkmark$ & 5/5 & 0.00 & 0.00 \\
\texttt{B14\_14\_6} & 14 & 14 & 0 & $\checkmark$ & 5/5 & 0.00 & 0.00 \\
\texttt{B14\_14\_7} & 14 & 14 & 0 & $\checkmark$ & 5/5 & 0.00 & 0.00 \\
\texttt{B14\_14\_8} & 14 & 14 & 0 & $\checkmark$ & 5/5 & 0.00 & 0.00 \\
\texttt{B14\_14\_9} & 14 & 14 & 0 & $\checkmark$ & 5/5 & 0.00 & 0.00 \\
\texttt{B14\_14\_10} & 14 & 14 & 0 & $\checkmark$ & 5/5 & 0.00 & 0.00 \\
\midrule
\texttt{B15\_15\_1} & 15 & 15 & 0 & $\checkmark$ & 5/5 & 0.00 & 0.00 \\
\texttt{B15\_15\_2} & 15 & 15 & 0 & $\checkmark$ & 5/5 & 0.00 & 0.01 \\
\texttt{B15\_15\_3} & 15 & 15 & 0 & $\checkmark$ & 5/5 & 0.00 & 0.00 \\
\texttt{B15\_15\_4} & 15 & 15 & 0 & $\checkmark$ & 5/5 & 0.00 & 0.00 \\
\texttt{B15\_15\_5} & 15 & 15 & 0 & $\checkmark$ & 5/5 & 0.00 & 0.00 \\
\texttt{B15\_15\_6} & 15 & 15 & 0 & $\checkmark$ & 5/5 & 0.00 & 0.00 \\
\texttt{B15\_15\_7} & 15 & 15 & 0 & $\checkmark$ & 5/5 & 0.00 & 0.00 \\
\texttt{B15\_15\_8} & 15 & 15 & 0 & $\checkmark$ & 5/5 & 0.00 & 0.00 \\
\texttt{B15\_15\_9} & 15 & 15 & 0 & $\checkmark$ & 5/5 & 0.00 & 0.02 \\
\texttt{B15\_15\_10} & 15 & 15 & 0 & $\checkmark$ & 5/5 & 0.00 & 0.00 \\
\midrule
\texttt{B16\_16\_1} & 16 & 16 & 0 & $\checkmark$ & 5/5 & 0.00 & 0.00 \\
\texttt{B16\_16\_2} & 16 & 16 & 0 & $\checkmark$ & 5/5 & 0.00 & 0.01 \\
\texttt{B16\_16\_3} & 16 & 16 & 0 & $\checkmark$ & 5/5 & 0.00 & 0.00 \\
\texttt{B16\_16\_4} & 16 & 16 & 0 & $\checkmark$ & 5/5 & 0.00 & 0.02 \\
\texttt{B16\_16\_5} & 16 & 16 & 0 & $\checkmark$ & 5/5 & 0.00 & 0.00 \\
\texttt{B16\_16\_6} & 16 & 16 & 0 & $\checkmark$ & 5/5 & 0.00 & 0.00 \\
\texttt{B16\_16\_7} & 16 & 16 & 0 & $\checkmark$ & 5/5 & 0.00 & 0.00 \\
\texttt{B16\_16\_8} & 16 & 16 & 0 & $\checkmark$ & 5/5 & 0.00 & 0.02 \\
\texttt{B16\_16\_9} & 16 & 16 & 0 & $\checkmark$ & 5/5 & 0.00 & 0.02 \\
\texttt{B16\_16\_10} & 16 & 16 & 0 & $\checkmark$ & 5/5 & 0.00 & 0.03 \\
\midrule
\texttt{B24\_24\_1} & 24 & 24 & 0 & $\checkmark$ & 5/5 & 0.00 & 0.01 \\
\texttt{B24\_24\_2} & 24 & 24 & 0 & $\checkmark$ & 5/5 & 0.00 & 0.00 \\
\texttt{B24\_24\_3} & 24 & 24 & 0 & $\checkmark$ & 5/5 & 0.00 & 0.00 \\
\texttt{B24\_24\_4} & 24 & 24 & 0 & $\checkmark$ & 5/5 & 0.00 & 0.00 \\
\texttt{B24\_24\_5} & 24 & 24 & 0 & $\checkmark$ & 5/5 & 0.00 & 0.00 \\
\texttt{B24\_24\_6} & 24 & 24 & 0 & $\checkmark$ & 5/5 & 0.00 & 0.00 \\
\texttt{B24\_24\_7} & 24 & 24 & 0 & $\checkmark$ & 5/5 & 0.00 & 0.00 \\
\texttt{B24\_24\_8} & 24 & 24 & 0 & $\checkmark$ & 5/5 & 0.00 & 0.00 \\
\texttt{B24\_24\_9} & 24 & 24 & 0 & $\checkmark$ & 5/5 & 0.00 & 0.00 \\
\texttt{B24\_24\_10} & 24 & 24 & 0 & $\checkmark$ & 5/5 & 0.00 & 0.00 \\
\midrule
\texttt{B32\_32\_1} & 32 & 32 & 0 & $\checkmark$ & 5/5 & 0.00 & 0.01 \\
\texttt{B32\_32\_2} & 32 & 32 & 0 & $\checkmark$ & 5/5 & 0.00 & 1.23 \\
\texttt{B32\_32\_3} & 32 & 32 & 0 & $\checkmark$ & 5/5 & 0.00 & 0.00 \\
\texttt{B32\_32\_4} & 32 & 32 & 0 & $\checkmark$ & 5/5 & 0.00 & 0.00 \\
\texttt{B32\_32\_5} & 32 & 32 & 0 & $\checkmark$ & 5/5 & 0.00 & 0.00 \\
\texttt{B32\_32\_6} & 32 & 32 & 0 & $\checkmark$ & 5/5 & 0.00 & 0.00 \\
\texttt{B32\_32\_7} & 32 & 32 & 0 & $\checkmark$ & 5/5 & 0.00 & 0.00 \\
\texttt{B32\_32\_8} & 32 & 32 & 0 & $\checkmark$ & 5/5 & 0.00 & 0.00 \\
\texttt{B32\_32\_9} & 32 & 32 & 0 & $\checkmark$ & 5/5 & 0.00 & 0.00 \\
\texttt{B32\_32\_10} & 32 & 32 & 0 & $\checkmark$ & 5/5 & 0.00 & 0.00 \\
\midrule
\texttt{B48\_48\_1} & 48 & 48 & 0 & $\checkmark$ & 5/5 & 0.00 & 0.02 \\
\texttt{B48\_48\_2} & 48 & 48 & 0 & $\checkmark$ & 5/5 & 0.00 & 0.01 \\
\texttt{B48\_48\_3} & 48 & 48 & 0 & $\checkmark$ & 5/5 & 0.01 & 0.03 \\
\texttt{B48\_48\_4} & 48 & 48 & 0 & $\checkmark$ & 5/5 & 0.00 & 0.01 \\
\texttt{B48\_48\_5} & 48 & 48 & 0 & $\checkmark$ & 5/5 & 0.02 & 0.03 \\
\texttt{B48\_48\_6} & 48 & 48 & 0 & $\checkmark$ & 5/5 & 0.00 & 0.02 \\
\texttt{B48\_48\_7} & 48 & 48 & 0 & $\checkmark$ & 5/5 & 0.02 & 0.02 \\
\texttt{B48\_48\_8} & 48 & 48 & 0 & $\checkmark$ & 5/5 & 0.00 & 0.02 \\
\texttt{B48\_48\_9} & 48 & 48 & 0 & $\checkmark$ & 5/5 & 0.00 & 0.01 \\
\texttt{B48\_48\_10} & 48 & 48 & 0 & $\checkmark$ & 5/5 & 0.00 & 0.01 \\
\midrule
\texttt{B64\_64\_1} & 64 & 64 & 0 & $\checkmark$ & 5/5 & 0.05 & 0.07 \\
\texttt{B64\_64\_2} & 64 & 64 & 0 & $\checkmark$ & 5/5 & 0.01 & 0.04 \\
\texttt{B64\_64\_3} & 64 & 64 & 0 & $\checkmark$ & 5/5 & 0.01 & 0.05 \\
\texttt{B64\_64\_4} & 64 & 64 & 0 & $\checkmark$ & 5/5 & 0.01 & 0.11 \\
\texttt{B64\_64\_5} & 64 & 64 & 0 & $\checkmark$ & 5/5 & 0.01 & 0.33 \\
\texttt{B64\_64\_6} & 64 & 64 & 0 & $\checkmark$ & 5/5 & 0.01 & 0.13 \\
\texttt{B64\_64\_7} & 64 & 64 & 0 & $\checkmark$ & 5/5 & 0.01 & 0.12 \\
\texttt{B64\_64\_8} & 64 & 64 & 0 & $\checkmark$ & 5/5 & 0.01 & 0.20 \\
\texttt{B64\_64\_9} & 64 & 64 & 0 & $\checkmark$ & 5/5 & 0.01 & 0.23 \\
\texttt{B64\_64\_10} & 64 & 64 & 0 & $\checkmark$ & 5/5 & 0.01 & 4.09 \\
\midrule
\texttt{B100\_100\_1} & 100 & 100 & 0 & $\checkmark$ & 5/5 & 0.06 & 8.39 \\
\texttt{B100\_100\_2} & 100 & 100 & 0 & $\checkmark$ & 5/5 & 0.06 & 129.39 \\
\texttt{B100\_100\_3} & 100 & 100 & 0 & $\checkmark$ & 5/5 & 0.06 & 22.01 \\
\texttt{B100\_100\_4} & 100 & 100 & 0 & $\checkmark$ & 5/5 & 0.06 & 92.66 \\
\texttt{B100\_100\_5} & 100 & 100 & 0 & $\checkmark$ & 5/5 & 0.05 & 0.87 \\
\texttt{B100\_100\_6} & 100 & 100 & 0 & $\checkmark$ & 5/5 & 0.05 & 110.62 \\
\texttt{B100\_100\_7} & 99 & 100 & 1 &  & 5/5 & 0.06 & 660.00 \\
\texttt{B100\_100\_8} & 100 & 100 & 0 & $\checkmark$ & 5/5 & 0.05 & 126.83 \\
\texttt{B100\_100\_9} & 99 & 100 & 1 &  & 5/5 & 0.05 & 660.00 \\
\texttt{B100\_100\_10} & 99 & 100 & 1 &  & 5/5 & 0.05 & 660.00 \\
\midrule
\texttt{B3\_9\_1} & 5 & 5 & 0 & $\checkmark$ & 5/5 & 0.00 & 0.00 \\
\texttt{B3\_9\_2} & 5 & 5 & 0 & $\checkmark$ & 5/5 & 0.00 & 0.00 \\
\texttt{B3\_9\_3} & 5 & 5 & 0 & $\checkmark$ & 5/5 & 0.00 & 0.00 \\
\texttt{B3\_9\_4} & 5 & 5 & 0 & $\checkmark$ & 5/5 & 0.00 & 0.00 \\
\texttt{B3\_9\_5} & 4 & 4 & 0 & $\checkmark$ & 5/5 & 0.00 & 0.00 \\
\texttt{B3\_9\_6} & 5 & 5 & 0 & $\checkmark$ & 5/5 & 0.00 & 0.00 \\
\texttt{B3\_9\_7} & 4 & 4 & 0 & $\checkmark$ & 5/5 & 0.00 & 0.00 \\
\texttt{B3\_9\_8} & 4 & 4 & 0 & $\checkmark$ & 5/5 & 0.00 & 0.00 \\
\texttt{B3\_9\_9} & 4 & 4 & 0 & $\checkmark$ & 5/5 & 0.00 & 0.01 \\
\texttt{B3\_9\_10} & 4 & 4 & 0 & $\checkmark$ & 5/5 & 0.00 & 0.00 \\
\midrule
\texttt{B4\_16\_1} & 9 & 9 & 0 & $\checkmark$ & 5/5 & 0.00 & 0.07 \\
\texttt{B4\_16\_2} & 10 & 10 & 0 & $\checkmark$ & 5/5 & 0.00 & 0.29 \\
\texttt{B4\_16\_3} & 8 & 8 & 0 & $\checkmark$ & 5/5 & 0.00 & 0.04 \\
\texttt{B4\_16\_4} & 10 & 10 & 0 & $\checkmark$ & 5/5 & 0.00 & 0.15 \\
\texttt{B4\_16\_5} & 10 & 10 & 0 & $\checkmark$ & 5/5 & 0.00 & 0.22 \\
\texttt{B4\_16\_6} & 10 & 10 & 0 & $\checkmark$ & 5/5 & 0.00 & 0.22 \\
\texttt{B4\_16\_7} & 10 & 10 & 0 & $\checkmark$ & 5/5 & 0.00 & 0.32 \\
\texttt{B4\_16\_8} & 10 & 10 & 0 & $\checkmark$ & 5/5 & 0.00 & 0.51 \\
\texttt{B4\_16\_9} & 10 & 10 & 0 & $\checkmark$ & 5/5 & 0.00 & 0.25 \\
\texttt{B4\_16\_10} & 9 & 9 & 0 & $\checkmark$ & 5/5 & 0.00 & 0.09 \\
\midrule
\texttt{B5\_25\_1} & 14 & 14 & 0 & $\checkmark$ & 5/5 & 0.41 & 211.22 \\
\texttt{B5\_25\_2} & 13 & 14 & 1 &  & 5/5 & 0.03 & 660.73 \\
\texttt{B5\_25\_3} & 13 & 14 & 1 &  & 5/5 & 0.28 & 660.87 \\
\texttt{B5\_25\_4} & 12 & 13 & 1 &  & 5/5 & 512.08 & 660.79 \\
\texttt{B5\_25\_5} & 13 & 14 & 1 &  & 5/5 & 2.41 & 660.54 \\
\texttt{B5\_25\_6} & 12 & 14 & 2 &  & 5/5 & 1.81 & 660.48 \\
\texttt{B5\_25\_7} & 13 & 14 & 1 &  & 5/5 & 0.45 & 660.54 \\
\texttt{B5\_25\_8} & 13 & 14 & 1 &  & 5/5 & 221.33 & 660.50 \\
\texttt{B5\_25\_9} & 13 & 14 & 1 &  & 5/5 & 0.03 & 660.51 \\
\texttt{B5\_25\_10} & 13 & 14 & 1 &  & 5/5 & 1.22 & 660.48 \\
\midrule
\texttt{B6\_36\_1} & 14 & 19 & 5 &  & 5/5 & 40.39 & 660.30 \\
\texttt{B6\_36\_2} & 14 & 19 & 5 &  & 5/5 & 3.60 & 661.40 \\
\texttt{B6\_36\_3} & 13 & 19 & 6 &  & 5/5 & 55.91 & 660.51 \\
\texttt{B6\_36\_4} & 14 & 19 & 5 &  & 5/5 & 11.33 & 661.08 \\
\texttt{B6\_36\_5} & 13 & 19 & 6 &  & 5/5 & 64.63 & 660.60 \\
\texttt{B6\_36\_6} & 14 & 19 & 5 &  & 5/5 & 241.92 & 660.87 \\
\texttt{B6\_36\_7} & 13 & 19 & 6 &  & 5/5 & 6.96 & 660.30 \\
\texttt{B6\_36\_8} & 14 & 20 & 6 &  & 5/5 & 3.48 & 661.52 \\
\texttt{B6\_36\_9} & 15 & 20 & 5 &  & 5/5 & 3.93 & 660.62 \\
\texttt{B6\_36\_10} & 13 & 19 & 6 &  & 5/5 & 18.58 & 660.41 \\
\midrule
\texttt{B7\_49\_1} & 16 & 25 & 9 &  & 1/5 & 76.17 & 660.09 \\
\texttt{B7\_49\_2} & 15 & 25 & 10 &  & 2/5 & 209.47 & 660.10 \\
\texttt{B7\_49\_3} & 15 & 25 & 10 &  & 5/5 & 330.80 & 660.08 \\
\texttt{B7\_49\_4} & 15 & 25 & 10 &  & 2/5 & 127.01 & 660.09 \\
\texttt{B7\_49\_5} & 15 & 25 & 10 &  & 1/5 & 71.03 & 660.09 \\
\texttt{B7\_49\_6} & 15 & 25 & 10 &  & 5/5 & 237.82 & 660.09 \\
\texttt{B7\_49\_7} & 14 & 25 & 11 &  & 5/5 & 58.36 & 660.08 \\
\texttt{B7\_49\_8} & 15 & 24 & 9 &  & 2/5 & 115.83 & 660.08 \\
\texttt{B7\_49\_9} & 15 & 25 & 10 &  & 5/5 & 124.39 & 660.08 \\
\texttt{B7\_49\_10} & 15 & 25 & 10 &  & 3/5 & 271.44 & 660.07 \\
\midrule
\texttt{B8\_64\_1} & 16 & 30 & 14 &  & 5/5 & 48.05 & 660.26 \\
\texttt{B8\_64\_2} & 15 & 30 & 15 &  & 5/5 & 13.87 & 660.30 \\
\texttt{B8\_64\_3} & 16 & 29 & 13 &  & 1/5 & 266.22 & 660.32 \\
\texttt{B8\_64\_4} & 15 & 30 & 15 &  & 5/5 & 85.90 & 660.29 \\
\texttt{B8\_64\_5} & 16 & 30 & 14 &  & 5/5 & 19.50 & 660.29 \\
\texttt{B8\_64\_6} & 15 & 30 & 15 &  & 5/5 & 10.29 & 660.31 \\
\texttt{B8\_64\_7} & 16 & 30 & 14 &  & 5/5 & 16.88 & 660.31 \\
\texttt{B8\_64\_8} & 15 & 30 & 15 &  & 5/5 & 243.89 & 660.36 \\
\texttt{B8\_64\_9} & 15 & 30 & 15 &  & 5/5 & 112.38 & 660.34 \\
\texttt{B8\_64\_10} & 16 & 30 & 14 &  & 5/5 & 109.69 & 660.28 \\
\midrule
\texttt{B9\_81\_1} & 17 & 33 & 16 &  & 5/5 & 492.89 & 660.20 \\
\texttt{B9\_81\_2} & 16 & 34 & 18 &  & 5/5 & 303.16 & 660.20 \\
\texttt{B9\_81\_3} & 16 & 35 & 19 &  & 5/5 & 0.52 & 660.22 \\
\texttt{B9\_81\_4} & 16 & 34 & 18 &  & 3/5 & 306.83 & 660.20 \\
\texttt{B9\_81\_5} & 17 & 34 & 17 &  & 1/5 & 114.96 & 660.21 \\
\texttt{B9\_81\_6} & 17 & 34 & 17 &  & 2/5 & 223.07 & 660.22 \\
\texttt{B9\_81\_7} & 16 & 34 & 18 &  & 5/5 & 463.06 & 660.20 \\
\texttt{B9\_81\_8} & 16 & 34 & 18 &  & 5/5 & 17.41 & 660.17 \\
\texttt{B9\_81\_9} & 16 & 34 & 18 &  & 2/5 & 223.93 & 660.16 \\
\texttt{B9\_81\_10} & 16 & 34 & 18 &  & 5/5 & 298.55 & 660.17 \\
\midrule
\texttt{B10\_100\_1} & 18 & 47 & 29 &  & 5/5 & 260.02 & 660.13 \\
\texttt{B10\_100\_2} & 18 & 46 & 28 &  & 5/5 & 286.91 & 660.21 \\
\texttt{B10\_100\_3} & 18 & 46 & 28 &  & 5/5 & 394.61 & 660.17 \\
\texttt{B10\_100\_4} & 18 & 47 & 29 &  & 5/5 & 89.10 & 660.15 \\
\texttt{B10\_100\_5} & 18 & 47 & 29 &  & 5/5 & 258.41 & 660.13 \\
\texttt{B10\_100\_6} & 18 & 47 & 29 &  & 5/5 & 33.92 & 660.11 \\
\texttt{B10\_100\_7} & 18 & 47 & 29 &  & 5/5 & 258.58 & 660.18 \\
\texttt{B10\_100\_8} & 18 & 46 & 28 &  & 1/5 & 222.96 & 660.14 \\
\texttt{B10\_100\_9} & 18 & 46 & 28 &  & 1/5 & 115.04 & 660.15 \\
\texttt{B10\_100\_10} & 18 & 46 & 28 &  & 2/5 & 220.95 & 660.19 \\
\midrule
\texttt{B11\_121\_1} & 20 & 52 & 32 &  & 3/5 & 184.92 & 660.05 \\
\texttt{B11\_121\_2} & 20 & 51 & 31 &  & 5/5 & 362.53 & 660.06 \\
\texttt{B11\_121\_3} & 20 & 52 & 32 &  & 5/5 & 284.82 & 660.07 \\
\texttt{B11\_121\_4} & 20 & 52 & 32 &  & 5/5 & 276.81 & 660.06 \\
\texttt{B11\_121\_5} & 20 & 52 & 32 &  & 5/5 & 271.28 & 660.07 \\
\texttt{B11\_121\_6} & 20 & 52 & 32 &  & 5/5 & 210.28 & 660.06 \\
\texttt{B11\_121\_7} & 20 & 51 & 31 &  & 1/5 & 147.58 & 660.06 \\
\texttt{B11\_121\_8} & 20 & 51 & 31 &  & 5/5 & 356.65 & 660.06 \\
\texttt{B11\_121\_9} & 20 & 52 & 32 &  & 5/5 & 110.80 & 660.06 \\
\texttt{B11\_121\_10} & 20 & 52 & 32 &  & 5/5 & 29.26 & 660.05 \\
\midrule
\texttt{B12\_144\_1} & 21 & 56 & 35 &  & 5/5 & 149.11 & 660.08 \\
\texttt{B12\_144\_2} & 21 & 55 & 34 &  & 4/5 & 446.21 & 660.08 \\
\texttt{B12\_144\_3} & 21 & 56 & 35 &  & 1/5 & 256.07 & 660.08 \\
\texttt{B12\_144\_4} & 21 & 56 & 35 &  & 5/5 & 489.77 & 660.06 \\
\texttt{B12\_144\_5} & 22 & 56 & 34 &  & 2/5 & 292.10 & 660.09 \\
\texttt{B12\_144\_6} & 21 & 55 & 34 &  & 3/5 & 445.57 & 660.08 \\
\texttt{B12\_144\_7} & 22 & 56 & 34 &  & 2/5 & 171.15 & 660.08 \\
\texttt{B12\_144\_8} & 22 & 56 & 34 &  & 2/5 & 234.70 & 660.06 \\
\texttt{B12\_144\_9} & 21 & 56 & 35 &  & 5/5 & 398.83 & 660.07 \\
\texttt{B12\_144\_10} & 21 & 56 & 35 &  & 2/5 & 229.11 & 660.08 \\
\midrule
\texttt{B13\_169\_1} & 23 & 60 & 37 &  & 2/5 & 328.84 & 660.03 \\
\texttt{B13\_169\_2} & 23 & 61 & 38 &  & 4/5 & 191.33 & 660.03 \\
\texttt{B13\_169\_3} & 23 & 61 & 38 &  & 5/5 & 108.83 & 660.03 \\
\texttt{B13\_169\_4} & 23 & 61 & 38 &  & 5/5 & 330.51 & 660.04 \\
\texttt{B13\_169\_5} & 23 & 61 & 38 &  & 4/5 & 166.78 & 660.03 \\
\texttt{B13\_169\_6} & 23 & 60 & 37 &  & 2/5 & 288.65 & 660.04 \\
\texttt{B13\_169\_7} & 23 & 61 & 38 &  & 5/5 & 97.62 & 660.04 \\
\texttt{B13\_169\_8} & 23 & 61 & 38 &  & 5/5 & 139.86 & 660.04 \\
\texttt{B13\_169\_9} & 23 & 61 & 38 &  & 5/5 & 248.55 & 660.03 \\
\texttt{B13\_169\_10} & 23 & 61 & 38 &  & 4/5 & 208.10 & 660.03 \\
\midrule
\texttt{B14\_196\_1} & 24 & 65 & 41 &  & 4/5 & 146.87 & 660.04 \\
\texttt{B14\_196\_2} & 25 & 64 & 39 &  & 1/5 & 103.00 & 660.05 \\
\texttt{B14\_196\_3} & 25 & 65 & 40 &  & 2/5 & 188.98 & 660.04 \\
\texttt{B14\_196\_4} & 25 & 66 & 41 &  & 5/5 & 216.12 & 660.03 \\
\texttt{B14\_196\_5} & 25 & 65 & 40 &  & 5/5 & 200.86 & 660.04 \\
\texttt{B14\_196\_6} & 24 & 65 & 41 &  & 5/5 & 407.17 & 660.04 \\
\texttt{B14\_196\_7} & 25 & 64 & 39 &  & 1/5 & 304.17 & 660.03 \\
\texttt{B14\_196\_8} & 24 & 65 & 41 &  & 3/5 & 185.06 & 660.04 \\
\texttt{B14\_196\_9} & 24 & 65 & 41 &  & 5/5 & 256.88 & 660.04 \\
\texttt{B14\_196\_10} & 25 & 65 & 40 &  & 5/5 & 469.19 & 660.04 \\
\midrule
\texttt{B15\_225\_1} & 26 & 68 & 42 &  & 3/5 & 428.43 & 660.03 \\
\texttt{B15\_225\_2} & 26 & 68 & 42 &  & 1/5 & 121.00 & 660.02 \\
\texttt{B15\_225\_3} & 26 & 68 & 42 &  & 1/5 & 324.21 & 660.03 \\
\texttt{B15\_225\_4} & 26 & 68 & 42 &  & 1/5 & 393.84 & 660.05 \\
\texttt{B15\_225\_5} & 26 & 69 & 43 &  & 1/5 & 52.81 & 660.03 \\
\texttt{B15\_225\_6} & 26 & 68 & 42 &  & 2/5 & 420.71 & 660.03 \\
\texttt{B15\_225\_7} & 26 & 68 & 42 &  & 1/5 & 210.44 & 660.04 \\
\texttt{B15\_225\_8} & 26 & 68 & 42 &  & 3/5 & 332.92 & 660.04 \\
\texttt{B15\_225\_9} & 26 & 68 & 42 &  & 1/5 & 189.16 & 660.03 \\
\texttt{B15\_225\_10} & 26 & 68 & 42 &  & 1/5 & 240.63 & 660.05 \\
\midrule
\texttt{B16\_256\_1} & 28 & 73 & 45 &  & 5/5 & 30.30 & 660.02 \\
\texttt{B16\_256\_2} & 28 & 73 & 45 &  & 5/5 & 39.34 & 660.02 \\
\texttt{B16\_256\_3} & 28 & 73 & 45 &  & 1/5 & 72.98 & 660.03 \\
\texttt{B16\_256\_4} & 27 & 73 & 46 &  & 3/5 & 94.18 & 660.03 \\
\texttt{B16\_256\_5} & 28 & 73 & 45 &  & 2/5 & 150.77 & 660.03 \\
\texttt{B16\_256\_6} & 28 & 73 & 45 &  & 4/5 & 146.74 & 660.02 \\
\texttt{B16\_256\_7} & 27 & 72 & 45 &  & 1/5 & 77.44 & 660.03 \\
\texttt{B16\_256\_8} & 28 & 72 & 44 &  & 2/5 & 135.27 & 660.02 \\
\texttt{B16\_256\_9} & 27 & 73 & 46 &  & 5/5 & 22.90 & 660.02 \\
\texttt{B16\_256\_10} & 28 & 72 & 44 &  & 1/5 & 143.66 & 660.02 \\
\midrule
\texttt{B24\_576\_1} & 39 & 98 & 59 &  & 1/5 & 204.10 & 660.02 \\
\texttt{B24\_576\_2} & 39 & 98 & 59 &  & 1/5 & 302.48 & 660.02 \\
\texttt{B24\_576\_3} & 39 & 98 & 59 &  & 1/5 & 25.66 & 660.03 \\
\texttt{B24\_576\_4} & 39 & 98 & 59 &  & 1/5 & 123.99 & 660.03 \\
\texttt{B24\_576\_5} & 39 & 98 & 59 &  & 1/5 & 35.49 & 660.02 \\
\texttt{B24\_576\_6} & 39 & 98 & 59 &  & 1/5 & 110.62 & 660.02 \\
\texttt{B24\_576\_7} & 39 & 98 & 59 &  & 1/5 & 82.90 & 660.02 \\
\texttt{B24\_576\_8} & 39 & 98 & 59 &  & 2/5 & 62.60 & 660.04 \\
\texttt{B24\_576\_9} & 39 & 98 & 59 &  & 1/5 & 21.14 & 660.03 \\
\texttt{B24\_576\_10} & 39 & 98 & 59 &  & 1/5 & 43.98 & 660.03 \\
\midrule
\texttt{B32\_1024\_1} & 53 & 150 & 97 &  & 2/5 & 268.95 & 660.03 \\
\texttt{B32\_1024\_2} & 53 & 151 & 98 &  & 3/5 & 98.52 & 660.03 \\
\texttt{B32\_1024\_3} & 53 & 149 & 96 &  & 2/5 & 222.22 & 660.03 \\
\texttt{B32\_1024\_4} & 53 & 150 & 97 &  & 2/5 & 59.44 & 660.02 \\
\texttt{B32\_1024\_5} & 53 & 150 & 97 &  & 2/5 & 67.92 & 660.03 \\
\texttt{B32\_1024\_6} & 53 & 149 & 96 &  & 1/5 & 148.52 & 660.02 \\
\texttt{B32\_1024\_7} & 53 & 150 & 97 &  & 1/5 & 211.46 & 660.02 \\
\texttt{B32\_1024\_8} & 53 & 151 & 98 &  & 2/5 & 76.30 & 660.03 \\
\texttt{B32\_1024\_9} & 53 & 150 & 97 &  & 3/5 & 70.19 & 660.02 \\
\texttt{B32\_1024\_10} & 53 & 150 & 97 &  & 3/5 & 182.31 & 660.04 \\
\midrule
\texttt{B48\_2304\_1} & 75 & 200 & 125 &  & 2/5 & 50.12 & 660.03 \\
\texttt{B48\_2304\_2} & 75 & 199 & 124 &  & 2/5 & 307.22 & 660.03 \\
\texttt{B48\_2304\_3} & 75 & 199 & 124 &  & 1/5 & 239.91 & 660.03 \\
\texttt{B48\_2304\_4} & 75 & 200 & 125 &  & 4/5 & 224.27 & 660.02 \\
\texttt{B48\_2304\_5} & 75 & 200 & 125 &  & 3/5 & 158.18 & 660.02 \\
\texttt{B48\_2304\_6} & 75 & 199 & 124 &  & 2/5 & 93.39 & 660.05 \\
\texttt{B48\_2304\_7} & 75 & 199 & 124 &  & 4/5 & 179.57 & 660.03 \\
\texttt{B48\_2304\_8} & 75 & 200 & 125 &  & 4/5 & 73.63 & 660.03 \\
\texttt{B48\_2304\_9} & 75 & 199 & 124 &  & 1/5 & 80.26 & 660.03 \\
\texttt{B48\_2304\_10} & 75 & 200 & 125 &  & 2/5 & 84.84 & 660.03 \\
\midrule
\texttt{B64\_4096\_1} & 97 & 244 & 147 &  & 4/5 & 125.86 & 660.04 \\
\texttt{B64\_4096\_2} & 97 & 242 & 145 &  & 1/5 & 74.30 & 660.04 \\
\texttt{B64\_4096\_3} & 97 & 243 & 146 &  & 4/5 & 380.74 & 660.03 \\
\texttt{B64\_4096\_4} & 97 & 243 & 146 &  & 2/5 & 222.72 & 660.04 \\
\texttt{B64\_4096\_5} & 96 & 243 & 147 &  & 2/5 & 79.33 & 660.04 \\
\texttt{B64\_4096\_6} & 97 & 243 & 146 &  & 1/5 & 240.90 & 660.04 \\
\texttt{B64\_4096\_7} & 97 & 244 & 147 &  & 2/5 & 62.80 & 660.04 \\
\texttt{B64\_4096\_8} & 97 & 243 & 146 &  & 1/5 & 18.83 & 660.04 \\
\texttt{B64\_4096\_9} & 97 & 242 & 145 &  & 1/5 & 143.71 & 660.04 \\
\texttt{B64\_4096\_10} & 97 & 243 & 146 &  & 1/5 & 218.36 & 660.04 \\
\midrule
\texttt{B100\_10000\_1} & 151 & 404 & 253 &  & 4/5 & 105.39 & 660.21 \\
\texttt{B100\_10000\_2} & 151 & 406 & 255 &  & 2/5 & 92.97 & 660.21 \\
\texttt{B100\_10000\_3} & 151 & 404 & 253 &  & 2/5 & 156.23 & 660.19 \\
\texttt{B100\_10000\_4} & 151 & 404 & 253 &  & 1/5 & 141.99 & 660.21 \\
\texttt{B100\_10000\_5} & 151 & 403 & 252 &  & 3/5 & 263.30 & 660.26 \\
\bottomrule
\end{longtable}}

\endgroup

\end{document}